\newcommand{\E}{\mathbb{E}}
\newcommand{\N}{\mathcal{N}}
\renewcommand{\Re}{\mathbb{R}}
\newcommand{\bx}{x}
\newcommand{\bW}{W}
\newcommand{\eW}{\overline{W}}
\newcommand{\U}{\mathcal{U}}
\newtheorem{thm}{Theorem}
\newtheorem{remark}{Remark}
\begin{document}

\title{ Greedy Gossip with Eavesdropping}
\author{\authorblockN{Deniz \"Ustebay$^\star$, Boris Oreshkin, Mark Coates, and Michael Rabbat}\\
\authorblockA{Department of Electrical and Computer Engineering\\ McGill University\\
3480 University St, Montr\'{e}al, Qu\'{e}bec, Canada\\
Email: \url{{deniz.ustebay, boris.oreshkin}@mail.mcgill.ca},\\
\url{{mark.coates, michael.rabbat}@mcgill.ca}}}

\maketitle

\begin{abstract}
This paper presents \emph{greedy gossip with eavesdropping} (GGE), a novel randomized gossip algorithm for distributed computation of the average consensus problem. In gossip algorithms, nodes in the network randomly communicate with their neighbors and exchange information iteratively. The algorithms are simple and decentralized, making them attractive for wireless network applications. In general, gossip algorithms are robust to unreliable wireless conditions and time varying network topologies. In this paper we introduce GGE and demonstrate that greedy updates lead to rapid convergence.  We do not require nodes to have any location information.  Instead, greedy updates are made possible by exploiting the broadcast nature of wireless communications. During the operation of GGE, when a node decides to gossip, instead of choosing one of its neighbors at random, it makes a greedy selection, choosing the node which has the value most different from its own. In order to make this selection, nodes need to know their neighbors' values. Therefore, we assume that all transmissions are wireless broadcasts and nodes keep track of their neighbors' values by eavesdropping on their communications. We show that the convergence of GGE is guaranteed for connected network topologies. We also study the rates of convergence and illustrate, through theoretical bounds and numerical simulations, that GGE consistently outperforms randomized gossip and performs comparably to geographic gossip on moderate-sized random geometric graph topologies.
\end{abstract}

\begin{keywords}
Distributed signal processing, Gossip algorithms, Average consensus, Applications of sensor networks
\end{keywords}

\section{Introduction and Background \protect \footnote{Portions of this work were presented in \cite{ustebay08}, \cite{ustebay08a}, \cite{ustebay09}.}}

Distributed consensus is recognized as a fundamental problem of distributed control and signal processing applications (see, e.g., \cite{tsitsiklis84,SundharRam08a,rabbat05,rabbat06,kashyap07,sundaram08,dimakis06} and references therein). The prototypical example of a consensus problem is computation of the \emph{average consensus}: for a network of $n$ nodes, initially each node has a scalar data value, $y_i$, and the goal is to find a distributed algorithm that asymptotically computes the average, $\bar y=\frac{1}{n}{\sum_{i=1}^n {y_i}}$ at every node $i$. Such an algorithm can further be used for computing linear functions of the data and can be generalized for averaging vectorial data.

One of the algorithms proposed for solving the average consensus problem is distributed averaging \cite{xiao04}. In distributed averaging, every node in the network broadcasts its information at each iteration so that neighboring nodes can receive and use this information for their updates. However, with this scheme the speed of information diffusion across the network  is slow for topologies used to model wireless mesh and sensor networks. The information at each node typically does not change much from iteration to iteration.  Hence, the broadcast medium is not efficiently used. Recently, gossip algorithms have gained attention for the computation of average consensus \cite{boyd06,dimakis06}. In contrast to distributed averaging, gossip algorithms allow only two neighboring nodes to communicate and exchange information at each iteration. Restricting all information exchange to be local in this fashion is attractive from the point of view of simplicity and robustness (e.g., to changing topology and unreliable network conditions).

In this paper we propose a new randomized gossip algorithm, \emph{greedy gossip with eavesdropping} (GGE), for average consensus computation. Unlike previous randomized gossip algorithms, which perform updates completely at random, GGE takes advantage of the broadcast nature of wireless communications and implements a greedy neighbor selection procedure. We assume a broadcast transmission model such that all neighbors within range of a transmitting node receive the message. Thereby, in addition to keeping track of its own value, each node tracks its neighbors values by eavesdropping on their transmissions. At each iteration, the activated node uses this information to greedily choose the neighbor with which it will gossip, selecting the neighbor whose value is most different from its own.  Accelerating convergence in this myopic way does not bias computation and does not rely on geographic location information, which may change in networks of mobile nodes.

Although GGE is a powerful yet simple variation on gossip-style algorithms, analyzing its convergence behavior is non-trivial.  The main reason is that each GGE update depends explicitly on the values at each node (via the greedy decision of with which neighbor to gossip).  Thus, the standard approach to proving convergence to the average consensus solution (i.e., expressing updates in terms of a linear recursion and then imposing properties on this recursion) cannot be applied to guarantee convergence of  GGE. To prove convergence, we demonstrate that GGE updates correspond to iterations of a distributed randomized incremental subgradient optimization algorithm. Similarly, analysis of the convergence rate of GGE requires a different approach than the standard approach of examining the mixing time of a related Markov chain. We develop a bound relating the rate of convergence of GGE to the rate of standard randomized gossip. The bound indicates that GGE always converges faster than randomized gossip, a finding supported by simulation results. We also provide a worst-case bound on the rate of convergence of GGE. For other gossip algorithms the rate of convergence is generally characterized as a function of the second largest eigenvalue of a related stochastic matrix. In the case of GGE, our worst-case bound characterizes the rate of convergence in terms of a constant that is completely determined by the network topology. We investigate the behavior of this constant empirically for random geometric graph topologies and derive lower bounds that provide some characterization of its scaling properties.

\subsection{Background and Related Work}
\label{sec:background}
Randomized gossip was proposed in \cite{boyd06} as a decentralized asynchronous scheme for solving the average consensus problem.  At the $k$th iteration of randomized gossip, a node $s_k$ is chosen uniformly at random.  It chooses a neighbor, $t_k$, randomly, and this pair of nodes ``gossips'': $s_k$ and $t_k$ exchange values and perform the update $x_{s_k}(k) = x_{t_k}(k) = (x_{s_k}(k-1) + x_{t_k}(k-1))/2$, and all other nodes remain unchanged.  One can show that under very mild conditions on the way a random neighbor, $t_k$, is drawn, the values $x_i(k)$ converge to $\bar y$ at every node $i$ as $k \rightarrow \infty$ \cite{xiao04}. Because of the broadcast nature of wireless transmission, other neighbors overhear the messages exchanged between the active pair of nodes, but they do not make use of this information in existing randomized gossip algorithms.

The convergence rate of randomized gossip is characterized by relating the algorithm to a Markov chain \cite{boyd06}. The mixing time of this Markov chain is closely related to the averaging time of the gossip algorithm, and therefore defines the rate of convergence. For certain types of graph topologies, the mixing times are small and convergence of the gossip algorithm is fast. For example, in the case of a complete graph, the algorithm requires $O(n)$ iterations to converge. However topologies such as random geometric graphs~\cite{gupta00} or grids are more realistic for wireless applications. Boyd et al.~\cite{boyd06} prove that for random geometric graphs, randomized gossip requires $O(n^2/\log n)$ transmissions to approximate the average consensus well\footnote{Throughout this paper, when we refer to randomized gossip we specifically mean the natural random walk version of the algorithm, where the node $t_k$ is chosen uniformly from the set of neighbors of $s_k$ at each iteration. For random geometric graph topologies, which are of most interest to us, Boyd et al.~\cite{boyd06} prove that the performance of the natural random walk algorithm scales order-wise identically to that of the optimal choice of transition probabilities, so there is no loss of generality.}.

Motivated by the slow convergence of randomized gossip, Dimakis et al.~introduced \emph{geographic gossip} in~\cite{dimakis06}. Geographic gossip enables information exchange over multiple hops with the assumption that nodes have the knowledge of their geographic locations and the locations of their neighbors. It has been shown that long-range information exchange improves the rate of convergence to $O(n{\sqrt{n/\log n}})$ for random geometric graphs. However, geographic gossip involves overhead due to localization and geographic routing. Furthermore, the network needs to provide reliable two-way transmission over many hops.  Otherwise,  messages which are lost in transit will result in biasing the average consensus computation.

Recently, other fast gossiping algorithms have also been proposed.  Most related is the work of Li and Dai~\cite{li07}, and Jung et al.~\cite{jung07}.  Both approaches are based on directing the exchange of information across the network by constructing \emph{lifted} Markov chains using knowledge of the geographic locations of nodes. As an extension to geographic gossip, Benezit et al.~\cite{benezit07} have recently proposed averaging along paths, an algorithm which converges in $O(n)$ transmissions.  All of these approaches rely on geographic information and thus are not suitable to scenarios where nodes are mobile or location information is not available.  The focus of our work is on providing fast and communication-efficient computation that exploits broadcast transmissions rather than geo-location information to gossip quickly.

Aysal et al.~have proposed {\em broadcast gossip}, a consensus algorithm that also makes use of the broadcast nature of wireless networks \cite{aysal08,aysal08b}.  At each iteration, a node is activated uniformly at random to broadcast its value. All nodes within transmission range of the broadcasting node calculate a weighted average of their own value and the broadcasted value, and they update their local value with this weighted average. Broadcast gossip does not preserve the network average at each iteration.  It achieves a low variance (i.e., rapid convergence), but introduces bias: the value to which broadcast gossip converges can be significantly different from the true average.

Sundhar~Ram et al.~introduced a general class of incremental subgradient algorithms for distributed optimization in \cite{SundharRam08b}. In this study, the effects of stochastic errors (e.g., due to quantization) on the convergence of consensus-like distributed optimization algorithms are investigated. Convergence of their algorithm is guaranteed under certain conditions on the errors, but the convergence rates are not characterized.

Nedi\'c and Ozdaglar have also developed a distributed form of incremental subgradient optimization that generalizes the consensus framework \cite{nedic07}. Our problem formulation is not as general as theirs, but with the specific formulation addressed in this paper we achieve stronger results. In particular, our cost function has a specific form and, by exploiting it, we are able to guarantee convergence to an optimal solution and obtain tight bounds on the rate of convergence as a function of the network topology.

\subsection{Paper Organization}
\label{sec:organization}

The remainder of this paper is organized as follows. Section~\ref{sec:alg} introduces the formal definition of the algorithm. Section~\ref{sec:ggevsrg} presents two bounds; the first relates the performance of GGE to randomized gossip and indicates that GGE always outperforms randomized gossip, and the second is a worst-case upper bound on the rate of convergence of GGE in terms of a topology-dependent constant.  Results from numerical simulations are presented in Section~\ref{sec:sims}.  Motivated by these results we provide a multi-hop extension to our algorithm in Section~\ref{sec:MH}. Section~\ref{sec:summ} summarizes the contributions of the paper and discusses future work.

\section{Greedy Gossip with Eavesdropping (GGE)}
\label{sec:alg}

We consider a network of $n$ nodes and represent network connectivity as a graph, $G = (V,E)$, with vertices $V = \{1,\dots,n\}$, and edge set $E \subset V \times V$ such that $(i,j) \in E$ if and only if nodes $i$ and $j$ directly communicate.  We assume that communication relationships are symmetric and that the graph is connected. Let $\N_i = \{j\ :\ (i,j) \in E\}$ denote the set of neighbors of node $i$ (not including $i$ itself). Each node in the network has an initial value $y_i$, and the goal of the gossip algorithm is to use only local broadcast exchanges to arrive at a state where every node knows the average $\bar y = \frac{1}{n}\sum_{i=1}^n y_i$. To initialize the algorithm, each node sets its gossip value to $x_i(0) = y_i$.

At the $k$th iteration of GGE, a node $s_k$ is chosen uniformly at random from $\{1,\dots,n\}$.  (This can be accomplished using the asynchronous time model described in~\cite{bertsekasPDC}, where each node ``ticks'' according to a Poisson clock with rate 1.) Then, $s_k$ identifies a neighboring node $t_k$ satisfying
\begin{equation}
t_k \in \arg\max_{t \in \N_j} \left\{\frac{1}{2}(x_{s_k}(k-1) -  x_t(k-1))^2\right\}, \label{eqn:ggemax}
\end{equation}
which is to say, $s_k$ identifies a neighbor that currently has the most different value from its own. This choice is possible because each node $i$ maintains not only its own local variable, $x_i(k-1)$, but also a copy of the current values at its neighbors, $x_j(k-1)$, for $j \in \mathcal{N}_i$.
When $s_k$ has multiple neighbors whose values are all equally (and maximally) different from $s_k$'s, it chooses one of these neighbors at random. Then $s_k$ and $t_k$ perform the update
\begin{equation}
  x_{s_k}(k) = x_{t_k}(k) = \frac{1}{2}\big(x_{s_k}(k-1) +  x_{t_k}(k-1)\big), \label{eqn:ggeupdate}
\end{equation}
while all other nodes $i \notin \{s_k, t_k\}$ hold their values at $x_i(k) = x_i(k-1)$.  Finally, the two nodes, $s_k$ and $t_k$, broadcast these new values so that their neighbors have up-to-date information.

Note that one GGE iteration could be accomplished with just two transmissions since $s_k$ already knows the values at its neighbors: one broadcast from $s_k$ to $t_k$ notifying it of the change and simultaneously announcing the new value to all of $s_k$'s neighbors, and one broadcast from $t_k$ to its neighbors to echo the new value to them.  However, in networks with unreliable transmission or systems where nodes periodically shut off their radios to conserve energy, $s_k$ may miss some transmissions from its neighbors, and thus may not always have accurate information about their values.  In this case, mistakes in calculation at $s_k$ in the two-transmission scheme just described would introduce errors, biassing the consensus computation.  To make our algorithm more robust and address the case where $s_k$ does not precisely know the values at all neighbors, we assume a three-transmission version of our scheme throughout the rest of this paper: one transmission from $s_k$ to $t_k$ to initiate gossiping, one from $t_k$ to its neighbors to inform them of the new value, and one from $s_k$ to its neighbors to inform them of its new value.  We comment further on this issue in Section~\ref{subsec:staleInfo}.

\subsection{Initialization}
\label{sec:initial}
Calculating the greedy update in~\eqref{eqn:ggemax} requires nodes to know their neighbors' values. Similar to other randomized gossip algorithms, we assume that at the outset of gossip computation each node $i$ has already discovered its neighbor set, $\mathcal{N}_i$, but it does not know its neighbors' values. Instead, these values are learned during an initialization phase.  During the initialization phase, the node $s_k$ that is activated at iteration $k$ chooses $t_k$ randomly from the subset of its neighbors whose values it does not know, rather than performing a GGE update. Since $s_k$ and $t_k$ broadcast their new values after averaging, the nodes in their neighborhoods overhear and acquire information accordingly. Once $s_k$ has heard from all of its neighbors, the initialization process is complete for that particular node and it chooses $t_k$ greedily via \eqref{eqn:ggemax} for all subsequent iterations.

Figure~\ref{fig:initial} illustrates the effects of this initialization scheme relative to an idealized initialization scheme and a na\"ive initialization scheme.  In the idealized scheme, each node clairvoyantly knows all of its neighbors' initial values, and all nodes immediately commence with greedy updates.  In the na\"ive scheme, before any node begins gossiping, all nodes broadcast once (without performing an update) to inform their neighbors of their starting value.  The results shown correspond to a network of $200$ nodes\footnote{The topology is generated randomly from the family of random geometric graphs on $200$ nodes with the standard connectivity radius.  See Section~\ref{sec:sims} for further details on simulations, and note that the setup used here is the same as that used to generate Figure~\ref{fig:convcompa}.  Although not reported here due to space constraints, we observe similar behavior on other network topologies.}.  Observe that the proposed scheme and the na\"ive scheme attain similar performance, and both involve an overhead of roughly $n=200$ transmissions, which is not substantial relative to the total number of transmission required.  We prefer the proposed initialization scheme to the na\"ive broadcast approach because it does not require any scheduling mechanism.

\begin{figure}[h]
\centering
\includegraphics[height = 0.4\columnwidth,clip=true, viewport=1.3in 3.3in 7in 7.7in]{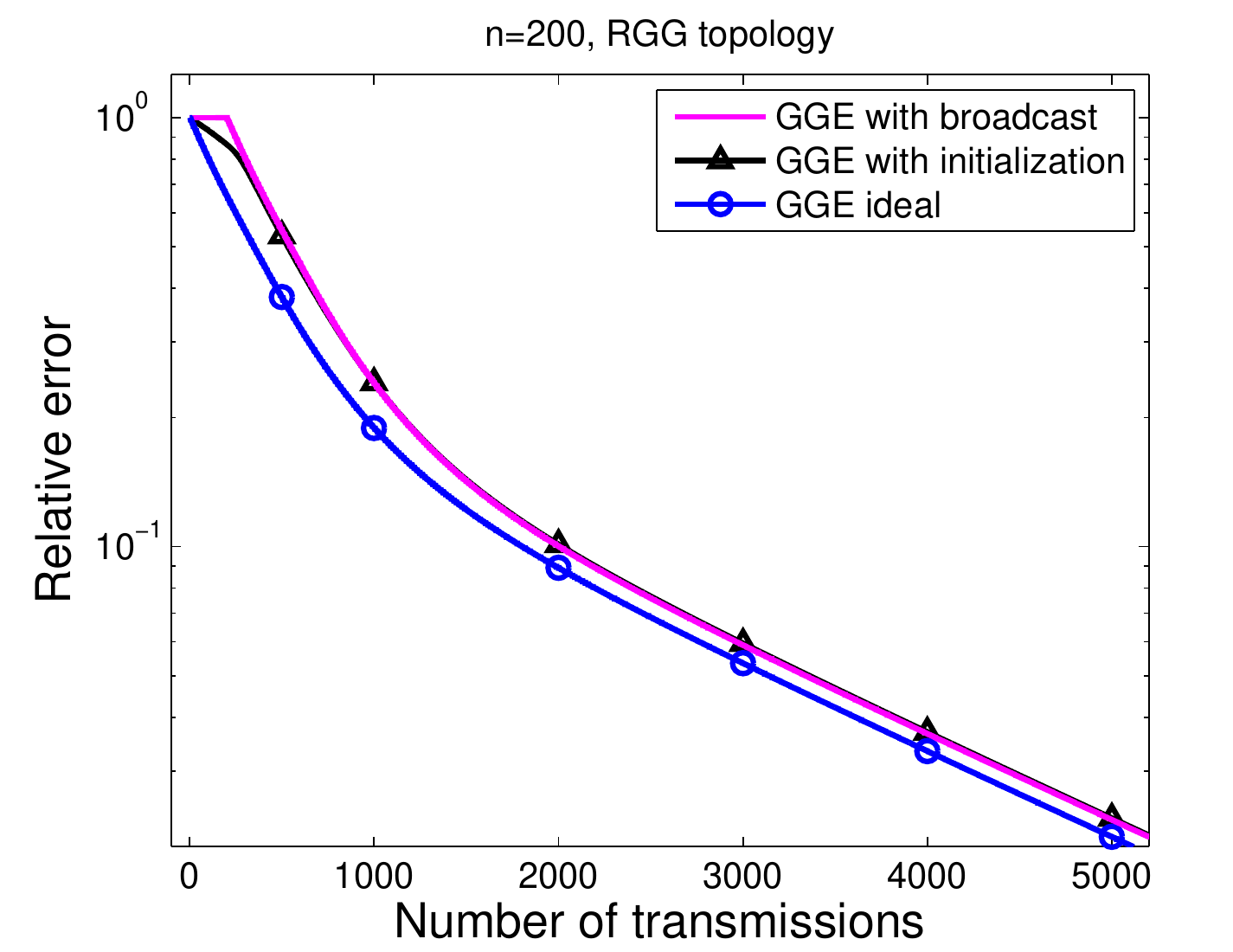}
\caption{A comparison of GGE performance for three different initializations of the algorithm. GGE with broadcast corresponds to the case where each node broadcasts its value once (without performing any updates) to initialize the values at all neighbors, before beginning to gossip.  GGE with initialization corresponds to the proposed scheme, where, before a node has heard from all its neighbors, it gossips randomly with one it has not heard from, and after it has heard from all neighbors its performs greedy updates.  Thus, the proposed GGE initialization scheme gives priority to learning neighbors' values during the initialization phase.  GGE ideal assumes that each node clairvoyantly knows its neighbors' values at the outset.  Note that GGE with broadcast and GGE with initialization give similar asymptotic performance (accruing overhead of roughly $n = 200$ transmissions), although the proposed scheme does not require any scheduling mechanism and during early iterations the proposed scheme actually uses each transmission to perform an update.} \label{fig:initial} 
\end{figure}

\section{Convergence Analysis}
\label{sec:ggevsrg}

\subsection{Convergence of GGE}
To derive convergence results, we interpret GGE as a randomized incremental subgradient method~\cite{nedic01}. Consider the constrained optimization problem,
\begin{eqnarray*}
  \min_{\bx \in \Re^n} & & \sum_{i=1}^n f_i(\bx)\\
  \mbox{subject to} & & \bx \in X,
\end{eqnarray*}
where we assume that each $f_i(\bx)$ is a convex function, not necessarily differentiable, and $X$ is a non-empty convex subset of $\Re^n$.  An incremental subgradient algorithm for solving this optimization problem is an iterative algorithm of the form:
\begin{equation}
x(k) = \mathcal{P}_X [x(k-1) - \alpha_k g(s_k, x(k-1))], \label{eqn:incrsubgradupdate}
\end{equation}
where $\alpha_k > 0$ is the step-size, $g(s_k, x(k-1))$ is a subgradient\footnote{Subgradients generalize the notion of a gradient for non-smooth functions.  The subgradient of a convex function $f_i$ at $x$ is any vector $g$ that satisfies $f_i(y) \ge f_i(x) + g^T (y - x)$.  The set of subgradients of $f_i$ at $x$ is referred to as the \emph{subdifferential} and is denoted by $\partial f_i(x)$. If $f_i$ is continuous at $x$, then $\partial f_i(x) = \{\nabla f_i(x)\}$; i.e., the only subgradient of $f_i$ at $x$ is the gradient.  A sufficient and necessary condition for $x^*$ to be a minimizer of the convex function $f_i$ is that $0 \in \partial f_i(x^*)$.  See \cite{nedic01} and references therein.} of $f_{s_k}$ at $x(k-1)$, and $\mathcal{P}_X[\cdot]$ projects its argument onto the set $X$.  The algorithm is randomized when the component updated at each iteration, $s_k$, is drawn uniformly at random from the set $\{1,\dots,n\}$, and is independent of $x(k-1)$. Intuitively, the algorithm resembles gradient descent, except that instead of taking a descent step in the direction of the gradient of the cost function, $f(\bx) = \sum_{i=1}^n f_i(\bx)$, at each iteration we focus on a single component, $f_i(\bx)$. The projection, $\mathcal{P}_X[\cdot]$, ensures that each new iterate $x(k)$ is feasible.  Under mild conditions on the sequence of step sizes, $\alpha_k$, and on the regularity of each component function $f_i(\bx)$, Nedi\'{c} and Bertsekas have shown that the randomized incremental subgradient method described above converges to a neighborhood of the global minimizer \cite{nedic01}.

 GGE is a randomized incremental subgradient algorithm for the problem
\begin{eqnarray}
\min_{\bx \in \Re^n} & & \sum_{i=1}^n \max_{j \in
\N_i}\left\{\frac{1}{2}(x_i - x_j)^2\right\} \label{eqn:ggecost}\\
\mbox{subject to} & & \sum_{i=1}^n x_i = \sum_{i=1}^n y_i,\label{eqn:ggeconstraint}
\end{eqnarray}
where $y_i$ is the initial value at node $i$.  The objective function in \eqref{eqn:ggecost} has a minimum value of 0 which is attained when $x_i = x_j$ for all $i,j$.  Thus, any minimizer is a consensus solution.  Moreover, the constraint $\sum_{i=1}^n x_i = \sum_{i=1}^n y_i$ ensures that the unique global minimizer is the average consensus.

To connect the GGE update, \eqref{eqn:ggeupdate}, and the incremental subgradient update, \eqref{eqn:incrsubgradupdate}, observe that $g(k)$, the subgradient of $f_{s_k}(x(k-1)) = \max_{j \in \N_{s_k}} \{\frac{1}{2}(x_{s_k}(k-1) - x_j(k-1))^2\}$, is defined by
\begin{equation}
g_i(k) = \left\{\begin{array}{c l} x_{s_k}(k-1) - x_{t_k}(k-1) & \mbox{
for } i = s_k,\\
-(x_{s_k}(k-1) - x_{t_k}(k-1)) & \mbox{ for } i = t_k,\\
0 & \mbox{ otherwise.}\end{array}\right. \label{eqn:ggesubgrad}
\end{equation}
Here the subscript $i$ indexes the components of the vector $g$. Fixing a constant step size $\alpha_k = \frac{1}{2}$, the update \eqref{eqn:incrsubgradupdate} is identical to \eqref{eqn:ggeupdate}. The recursive update for GGE thus has the form
\begin{equation}
x(k) = x(k-1) - \frac{1}{2} g(k), \label{eqn:ggeiteration}
\end{equation}
Note that the projection is unnecessary since the choice $\alpha_k = \frac{1}{2}$ ensures that the constraint $\sum_{i=1}^n x_i(k) = \sum_{i=1}^n y_i$ is satisfied at each iteration.

 Nedi\'{c} and Bertsekas study the convergence behavior of randomized incremental subgradient algorithms in
\cite{nedic01}. For a constant step size, their analysis only guarantees the iterates $x(k)$ will
reach a neighborhood of the optimal solution: with probability 1, $\min_k f(x(k)) \le \alpha n C^2 / 2$, where
$C \ge \|g(s_k, x(k))\|$ is an upper bound on the norm of the subgradient \cite{nedic01}.  We wish to show that
$x(k)$ actually converges to the average consensus, $\bar x$, the global minimizer of our optimization problem, and not just to a neighborhood of $\bar x$.  By
exploiting the specific form of the GGE cost function, we are able to prove the following stronger result.

\begin{thm}
Let $x(k)$ denote the sequence of iterates produced by GGE.  Then $x(k) \rightarrow \bar x$ almost surely.
\end{thm}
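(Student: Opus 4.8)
The plan is to use $V(k) = \|x(k) - \bar x\|^2$ as a Lyapunov function and exploit two facts that the standard Nedi\'c--Bertsekas analysis ignores: that the per-step decrease of $V$ is \emph{exactly} the active component's cost $f_{s_k}$, and that the subgradient norm (hence the ``neighborhood'' in which the constant-step method gets stuck) is itself controlled by that cost and therefore vanishes as $f(x(k))\to 0$. First I would record the elementary algebraic identity for a single pairwise average: since $\sum_i x_i(k)=\sum_i y_i$ is preserved for all $k$, the target $\bar x$ is fixed, and averaging the two components $s_k,t_k$ gives the exact (deterministic, per sample path) recursion
\begin{equation}
V(k-1)-V(k) = \tfrac{1}{2}\big(x_{s_k}(k-1)-x_{t_k}(k-1)\big)^2 = f_{s_k}\big(x(k-1)\big), \label{eqn:Vdrop}
\end{equation}
where the last equality is precisely the \emph{greedy} selection \eqref{eqn:ggemax}: $t_k$ realizes the max defining $f_{s_k}$, so the drop equals $f_{s_k}(x(k-1))$ regardless of tie-breaking. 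In particular $V(k)$ is non-increasing and bounded below by $0$, so $V(k)\downarrow V_\infty\ge 0$ along every sample path, and $\{x(k)\}$ stays in the ball of radius $\sqrt{V(0)}$ about $\bar x$.

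Next I would upgrade convergence of $V$ to convergence of the full cost by conditioning. Because $s_k$ is drawn uniformly from $\{1,\dots,n\}$ independently of the history $\mathcal{F}_{k-1}$, taking conditional expectation in \eqref{eqn:Vdrop} yields $\E\big[V(k-1)-V(k)\mid\mathcal{F}_{k-1}\big] = \frac{1}{n}\sum_{i=1}^n f_i(x(k-1)) = \frac{1}{n}f(x(k-1))$, so $V$ is a nonnegative supermartingale. Telescoping the expectations over $k=1,\dots,K$ and using $V(K)\ge 0$ gives $\frac{1}{n}\sum_{k\ge 1}\E\,f(x(k-1)) \le V(0) < \infty$. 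By Tonelli this forces $\sum_{k\ge1} f(x(k-1)) < \infty$ almost surely, and hence $f(x(k))\to 0$ almost surely. This is the crux: it says the iterates asymptotically reach the zero set of the cost, not merely a neighborhood of it.

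Finally I would translate $f(x(k))\to 0$ into $x(k)\to\bar x$. Since $f=\sum_i \max_{j\in\N_i}\frac12(x_i-x_j)^2$ is continuous and the iterates are bounded, any limit point $x^\star$ of a subsequence $x(k_m)\to x^\star$ satisfies $f(x^\star)=0$; this means $x^\star_i=x^\star_j$ across every edge, so by connectivity $x^\star$ is a consensus vector, and since the average is preserved in the limit, $x^\star=\bar x$. As every limit point equals $\bar x$, we conclude $V_\infty = \|\bar x-\bar x\|^2 = 0$; equivalently $\|x(k)-\bar x\|^2=V(k)\to 0$, which is the claimed almost-sure convergence.

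I expect the conceptual heart (and the only place requiring care) to be the conditioning step of the second paragraph, together with the realization that the greedy rule is exactly what makes the per-step drop \eqref{eqn:Vdrop} equal $f_{s_k}$ rather than a random neighbor's squared difference; this identity is what lets the vanishing of $V$'s increments be averaged into the vanishing of $f$ itself. Routing the argument through $f(x(k))\to 0$ globally also lets me avoid the more delicate route of tracking when each individual node is activated relative to a limit point, which would otherwise be the main obstacle.
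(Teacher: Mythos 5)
Your proof is correct, and although it opens with the same exact one-step identity as the paper (your per-step drop is the paper's \eqref{eqn:gge_errupdate}, since $\tfrac{1}{4}\|g(k)\|^2 = \tfrac{1}{2}(x_{s_k}(k-1)-x_{t_k}(k-1))^2$), it diverges at the decisive step. The paper stays path-wise throughout: it telescopes the identity, concludes $\sum_k \|g(k)\|^2 < \infty$ and hence $\|g(k)\| \to 0$ almost surely, and then invokes the statement that a vanishing subgradient characterizes the global minimizer to conclude $x(k) \to \bar x$. That last inference is the soft spot of the paper's argument: $g(k)$ defined in \eqref{eqn:ggesubgrad} is a subgradient of the single activated component $f_{s_k}$, not of the full cost \eqref{eqn:ggecost}, so $\|g(k)\| \to 0$ only says that the greedy differences at the \emph{randomly activated} nodes vanish; turning this into convergence of the entire vector requires an argument that every node's local discrepancy vanishes (e.g., that each node is activated infinitely often, or some averaging step), which the paper leaves implicit. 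Your route supplies exactly this missing ingredient: by conditioning on the history and averaging over the uniform draw of $s_k$, you convert the per-step drop into $\tfrac{1}{n} f(x(k-1))$ where $f$ is the \emph{global} cost, then Tonelli/monotone convergence gives $\sum_k f(x(k-1)) < \infty$ a.s., hence $f(x(k)) \to 0$ a.s., and your closing continuity/limit-point argument is airtight (boundedness of the iterates from monotonicity of $V$, limit points are consensus vectors by connectivity, and the preserved sum pins the limit to $\bar x$, so the bounded sequence with unique limit point converges). What the paper's approach buys is brevity and a deterministic argument up to its final line; what yours buys is rigor precisely where the randomness of $s_k$ matters, plus the quantitative byproduct $\sum_{k \ge 1} \E[f(x(k-1))] \le n \|x(0) - \bar x\|^2$, which is a statement the paper's path-wise telescoping does not directly yield.
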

 \begin{proof}
To begin, we examine the improvement in squared error after one GGE iteration. Expanding $x(k+1)$ via the expression \eqref{eqn:ggeiteration}, we have
\begin{eqnarray}
\|x(k) - \bar x\|^2 &=& \|x(k-1) - \frac{1}{2}g(k) - \bar x\|^2\nonumber\\
&=& \|x(k-1) - \bar x\|^2 - \langle x(k-1) - \bar x,\, g(k)\rangle + \frac{1}{4}\|g(k)\|^2. \nonumber
\end{eqnarray}
Based on the definition of $g(k)$ in \eqref{eqn:ggesubgrad}, given $s_k$ and $t_k$, we have
\begin{equation*}
\|g(k)\|^2 = 2(x_{s_k}(k-1) - x_{t_k}(k-1))^2,
\end{equation*}
and,
\begin{eqnarray*}
\langle x(k-1) - \bar x,\, g(k)\rangle &=& \sum_{i=1}^n (x_i(k-1) - \bar x_i) \; g_i(k)\\
&=& (x_{s_k}(k-1) - x_{t_k}(k-1))^2.
\end{eqnarray*}
Therefore, we have \begin{equation}\|x(k) - \bar x\|^2 = \|x(k-1) - \bar x\|^2 - \frac{1}{4}\|g(k)\|^2 \label{eqn:gge_errupdate}\end{equation}
with probability 1, since the expression holds independent of the value of $s_k$ and $t_k$. Recursively applying our update expression, we find that w.p.~1,
\begin{eqnarray}
\|x(k) - \bar x\|^2 &=& \|x(k-1) - \bar x\|^2 -\frac{1}{4}\|g(k)\|^2\nonumber\\
&=& \|x(k-2) - \bar x\|^2 - \frac{1}{4} \sum_{j= k-1}^{k} \|g(j)\|^2\nonumber\\
&\vdots&\nonumber\\
 &=& \|x(0) - \bar x\|^2 - \frac{1}{4} \sum_{j = 1}^{k} \|g(j)\|^2.\nonumber
\end{eqnarray}
Since $\|x(k) - \bar x\|^2 \ge 0$, we must have \begin{equation*}\sum_{j=1}^{k} \|g(j)\|^2 \le 4 \|x(0) - \bar x\|^2\end{equation*} w.p.~1, and, consequently, the series $\sum_{j=1}^{k} \|g(j)\|^2$ converges a.s.~as $k \rightarrow \infty$. Since each term $\|g(j)\|^2 \ge 0$, this also implies that $\|g(k)\|^2 \rightarrow 0$ a.s.~as $k \rightarrow \infty$. However, by definition, $g(k)$ is the subgradient of a convex function, and $g(k)=0$ is both a sufficient and necessary condition for $x(k)$ to be a global minimizer. Thus, $g(k) \rightarrow 0$ a.s.~implies that $x(k) \rightarrow \bar x$ a.s., since $\bar x$ is the unique minimizer of \eqref{eqn:ggecost}-\eqref{eqn:ggeconstraint}.
\end{proof}

\subsection{Convergence Rate: GGE vs. Randomized Gossip}
The following theorem establishes a general expression for the bound on the mean-squared error of GGE after $k$ iterations.  Moreover, it demonstrates that the upper bound on the MSE of GGE is less than or equal to the upper bound on the MSE of randomized gossip.

The GGE updates can also be expressed in the form $\bx(k) = \bW^{GGE}(k) \bx(k-1)$ where $\bW^{GGE}(k)$ is a stochastic matrix with $\bW^{GGE}_{s_k,  s_k}(k) = \bW^{GGE}_{s_k, t_k}(k) = \bW^{GGE}_{t_k, s_k}(k) =
\bW^{GGE}_{t_k, t_k}(k) = \frac{1}{2}$, $\bW^{GGE}_{i,i}(k) = 1$ for all $i \notin \{s_k, t_k\}$, and 0 elsewhere. We denote the application of $k$ successive GGE updates by $\bW^{GGE}(1:k) = \prod_{j=1}^k \bW^{GGE}(j)$.  Likewise, let $\bW^{RG}(1:k) = \prod_{j=1}^k \bW^{RG}(j)$ denote the successive application of $k$ randomized gossip updates.  Let $\eW = \mathbb{E}[\bW^{RG}(k)]$ denote the expected value of the randomized gossip matrix, and let $\lambda_2(\eW)$ denote the second largest eigenvalue of $\eW$.

\begin{thm} \label{th:param}
Let the algorithm input, $x(0)$, be given, let $\bar x$ denote the corresponding average consensus vector.  After $k$ iterations, the expected mean squared error of GGE is upper bounded as follows:
\begin{align}
&\mathbb{E}\left[\| \bW^{GGE}(1:k) x(0) - \bar x\|^2\right]\leq \|x(0) - \bar x\|^2 \prod\limits_{i=1}^k \left(\lambda_2\left(\eW\right) - \xi_i \right) \label{eqn:GGE_bound}
\end{align}
where $\xi_i = 0$ if $\mathbb{E}\left[\|\bW^{GGE}(1:i-1)x(0) - \bar x\|^2\right] = 0$, and otherwise,
\begin{align}
\xi_i &= \frac{\mathbb{E}\left[\sum\limits_{s=1}^n \left(\max\limits_{t\in \N_{s}} \big(x_{s}(i-1) - x_{t}(i-1)\big)^2\right) -\sum\limits_{s=1}^n \left(\frac{1}{|\N_{s}|} \sum\limits_{{t} \in \N_{s}} \big(x_{s}(i-1) -x_{t}(i-1)\big)^2\right)\right]} {2n\,\, \mathbb{E}\left[\|W^{GGE}(1:i-1)x(0) - \bar x\|^2\right]} \ge 0. \label{eqn:xi_k}
\end{align}
Note that $x(i-1)$ is a random quantity determined by the choice of $[s_1~s_2~ ... ~s_{i-1}]$, and the expectation in \eqref{eqn:xi_k} is over these variables.
\end{thm}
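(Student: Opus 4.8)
The plan is to establish a one-step contraction on the expected squared error and then unroll it into the product bound \eqref{eqn:GGE_bound}. The natural starting point is the exact per-iteration identity already obtained in the proof of Theorem~1, namely $\|x(k)-\bar x\|^2 = \|x(k-1)-\bar x\|^2 - \tfrac14\|g(k)\|^2$ together with $\|g(k)\|^2 = 2\max_{t\in\N_{s_k}}(x_{s_k}(k-1)-x_t(k-1))^2$, where the maximum appears precisely because $t_k$ is chosen greedily. First I would take the conditional expectation over the uniform choice of $s_k$ given the history $x(k-1)$; since $s_k$ is drawn uniformly from $\{1,\dots,n\}$ independently of the state, this yields $\E[\|x(k)-\bar x\|^2 \mid x(k-1)] = \|x(k-1)-\bar x\|^2 - \tfrac{1}{2n}\sum_{s=1}^n \max_{t\in\N_s}(x_s(k-1)-x_t(k-1))^2$.

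Next I would bring in randomized gossip for the comparison. The key structural fact is that every gossip averaging matrix is a projection, $\bW^{RG}(k)^T\bW^{RG}(k)=\bW^{RG}(k)$, so that, writing $z=x(k-1)-\bar x$ and using $\bW^{RG}(k)\bar x=\bar x$, the one-step randomized-gossip error satisfies $\E[\|\bW^{RG}(k)x(k-1)-\bar x\|^2\mid x(k-1)] = z^T\eW z$. Because $x(k-1)$ preserves the initial sum, $z$ is orthogonal to the all-ones vector, and the Rayleigh-quotient bound gives $z^T\eW z \le \lambda_2(\eW)\|z\|^2$. Computing the same conditional expectation explicitly for the natural random-walk version of randomized gossip (uniform $s$, then uniform $t\in\N_s$) shows it also equals $\|z\|^2 - \tfrac{1}{2n}\sum_{s}\tfrac{1}{|\N_s|}\sum_{t\in\N_s}(x_s-x_t)^2$. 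Equating these two expressions is what lets the per-neighbor average appear in $\xi_i$.

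I would then add and subtract this randomized-gossip decrease inside the GGE expression from the first step. The randomized-gossip part is bounded by $\lambda_2(\eW)\|z\|^2$, and the remainder is exactly $-\tfrac{1}{2n}\sum_s\big(\max_{t\in\N_s}(x_s-x_t)^2 - \tfrac{1}{|\N_s|}\sum_{t\in\N_s}(x_s-x_t)^2\big)$, which is nonpositive since a maximum dominates an average; this nonpositivity is what makes $\xi_i\ge0$ and hence renders GGE no slower than randomized gossip. Taking the full expectation over $s_1,\dots,s_{k-1}$, dividing the extra decrease by $\E[\|x(k-1)-\bar x\|^2]$ (the degenerate case where this vanishes being covered by the convention $\xi_i=0$), and recognizing the quotient as $\xi_k$ in \eqref{eqn:xi_k}, I obtain the one-step recursion $\E[\|x(k)-\bar x\|^2]\le(\lambda_2(\eW)-\xi_k)\,\E[\|x(k-1)-\bar x\|^2]$. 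Unrolling from $k$ down to $0$ then gives \eqref{eqn:GGE_bound}.

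The main obstacle I anticipate is the second step: correctly identifying the randomized-gossip conditional decrease with \emph{both} the spectral quantity $z^T\eW z\le\lambda_2(\eW)\|z\|^2$ and the explicit neighbor-averaged sum, and making the add-and-subtract cancellation match the precise form of the numerator of $\xi_i$. The bookkeeping of the conditioning (conditional expectation over $s_k$ versus the outer expectation over $s_1,\dots,s_{k-1}$ via the tower property) and the handling of the degenerate zero-error case also require care, but are routine once the decomposition is in place.
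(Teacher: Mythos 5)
Your proposal is correct and follows essentially the same route as the paper's proof: take the exact one-step GGE identity, add and subtract the randomized-gossip decrease $\frac{1}{2n}\sum_s\frac{1}{|\N_s|}\sum_{t\in\N_s}(x_s-x_t)^2$, bound the randomized-gossip part by $\lambda_2(\eW)$ applied to the GGE state at iteration $k-1$, identify the leftover max-minus-average term as $\xi_k$, and unroll the resulting recursion. The only difference is cosmetic: where the paper simply cites Boyd et al.\ for the randomized-gossip one-step relation and its $\lambda_2(\eW)$ bound, you derive it from scratch via the projection property of the gossip matrices and the Rayleigh-quotient argument, which is a valid filling-in of the cited facts.
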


\begin{remark}
The analogous expression for randomized gossip is simply~\cite{boyd06}
\begin{equation*}
\mathbb{E}\left[\|\bW^{RG}(1:k)x(0) - \bar x\|^2\right] \le \|x(0) - \bar x\|^2
\lambda_2(\eW)^k.
\end{equation*}
(Note that here, the expectation is taken with respect to both random nodes chosen at each iteration, $[s_1~s_2~ ... ~s_{k}]$ and $[t_1~t_2~ ... ~t_{k}]$, whereas in the expressions in the theorem, the only randomness is in $[s_1~s_2~ ... ~s_{k}]$.)  Since $\xi_i \ge 0$ for all $i=1,\dots,k$, this implies that the upper bound on GGE is uniformly upper bounded by the upper bound for randomized gossip, for any $k \ge 0$ and any input $x(0)$. The upper bound for random gossip is tight; if $v_2$ denotes the eigenvector corresponding to the second-largest eigenvalue of $\eW$, then if $x(0)=c v_2$ for some constant $c$, the upper bound holds with equality (in expectation).
\end{remark}

\begin{remark} The form of the terms $\xi_i$ also provides insight into which scenarios are less favorable for GGE.  In general, we know that randomized gossip is slow to converge on random geometric graphs \cite{boyd06}, and so we hope that $\xi_i > 0$ so that GGE achieves some improvement.  Note that the numerator of $\xi_i$ measures how much larger (on average) a GGE step from $x(i-1)$ is in comparison to the step taken by randomized gossip from the same location. There are two scenarios where the expression for $\xi_i$ in \eqref{eqn:xi_k} evaluates to 0.  The first is when $x(i-1) = \bar x$, in which case consensus has already been achieved.  The second is when the difference between any two neighbors is constant across the network; i.e., $(x_s - x_t)^2 = c$ for all $t \in \N_s$ and all $s=1,\dots,n$.  In this setting, being greedy does not provide any gain, since gossiping with any neighbor provides the same amount of immediate improvement. \end{remark}

\begin{proof}[Proof of Theorem~\ref{th:param}]
We recall the known convergence rate bounds for randomized gossip~\cite{boyd06}, \begin{equation} \label{eqn:RG_bound}
\mathbb{E}\left[||\bW^{RG}(1:k)x(0) - \bar x||^2\right] \leq \lambda_2\left(\eW\right)^k ||x(0) - \bar x||^2
\end{equation}
and the related recursive relationship,
\begin{align}
&\mathbb{E} \left[|| \bW^{RG}(1:k)x(0) - \bar x||^2\right]\nonumber\\
&= \mathbb{E}\left[|| \bW^{RG}(1:k-1)x(0) - \bar x||^2 -  \frac{1}{2n} \sum\limits_{s=1}^n \frac{1}{|\mathcal{N}_{s}|} \sum\limits_{t\in \mathcal{N}_{s}} \big(x_{s}(k-1) - x_{t}(k-1)\big)^2\right] \nonumber\\
&\leq \lambda_2\left(\eW\right) \mathbb{E} \left[|| \bW^{RG}(1:k-1)x(0) - \bar x||^2\right]. \label{eqn:oneRGstep}
\end{align}
We can identify an equivalent relationship derived from applying $k-1$ steps of GGE followed by one step of random gossip:
\begin{align}
&\mathbb{E} \left[||\bW^{RG}(k)\bW^{GGE}(1:k-1)x(0) - \bar x||^2\right] \nonumber\\
&= \mathbb{E} \left[|| \bW^{GGE}(1:k-1)x(0)  -\bar x||^2 -  \frac{1}{2n} \sum\limits_{s=1}^n \frac{1}{|\mathcal{N}_{s}|} \sum\limits_{t\in \mathcal{N}_{s}} \big(x_{s}(k-1) - x_{t}(k-1)\big)^2 \right]\nonumber\\
&\leq \lambda_2\left(\eW\right) \mathbb{E} \left[|| \bW^{GGE}(1:k-1)x(0)- \bar x||^2\right].
\end{align}
With this relationship in hand, we can bound the error of the GGE algorithm by adding and subtracting the effects of making the $k$-th step a randomized gossip update:
\begin{align}
&\mathbb{E} \left[|| \bW^{GGE}(1:k)x(0) - \bar x)||^2\right]\nonumber \\
&~~~= \mathbb{E} \left[|| \bW^{GGE}(1:k-1)x(0) - \bar x )||^2-  \frac{1}{2n}\sum\limits_{s=1}^n \frac{1}{|\mathcal{N}_{s}|} \sum\limits_{t\in \mathcal{N}_{s}} \big(x_{s}(k-1) - x_{t}(k-1)\big)^2\right]  \nonumber\\
&~~- \mathbb{E} \left[ \frac{1}{2n} \sum\limits_{s=1}^n \max\limits_{t\in \mathcal{N}_{s}} \big( x_{s}(k-1) -x_{t}(k-1)\big)^2 +  \frac{1}{2n} \sum\limits_{s=1}^n \frac{1}{|\mathcal{N}_{s}|} \sum\limits_{t\in \mathcal{N}_{s}} \big(x_{s}(k-1) - x_{t}(k-1)\big)^2\right]\nonumber\\
&~~~\leq \left[\lambda_2\left(\eW\right) - \xi_k \right] \mathbb{E}\left[||\bW^{GGE}(1:k-1)x(0) - \bar x||^2\right].
\end{align}
Repeated application of this inequality from $i=1,\dots,k$ yields the bound (\ref{eqn:GGE_bound}).
\end{proof}

\subsection{GGE Convergence Rate: Worst Case Bounds}
\label{sec:bound}

The previous subsection related the performance of GGE to that of standard randomized gossip.  Next, we seek a more direct characterization of the GGE rate of convergence in terms of properties of the underlying communication topology.  We then revisit our comparison to randomized gossip. The rate of convergence for gossip algorithms is typically quantified in terms
of the $\epsilon$-averaging time,
\begin{equation*}
T_{ave}(\epsilon) = \sup_{x(0) \ne 0} \inf \left\{ k\ :\
\Pr\big(\frac{\|x(k) - \bar{x}\|}{\|x(0) - \bar{x}\|} \ge \epsilon
\big) \le \epsilon\right\}.
\end{equation*}
Other gossip algorithms such as randomized gossip and geographic gossip are easily related to a homogeneous Markov chain, and $T_{ave}(\epsilon)$ can be shown to scale as a function of the second largest eigenvalue of $\eW$.  In particular (see Theorem~3 in \cite{boyd06}), $T_{ave}(\epsilon) \le \frac{3 \log \epsilon^{-1}}{\log \lambda_2(\eW)^{-1}}$.  For randomized gossip, the matrix $\eW$ depends on the choice of probabilities assigned to each edge in the network and hence depends on the network topology.

Since, in each iteration of GGE, the greedy decision depends on the gossip values at each node, $x(k)$, our algorithm cannot be related to a homogeneous Markov chain ($t_k$ depends on $x(k)$).  Consequently, the same machinery cannot be used to characterize the rate of convergence for GGE.  The goal of this section is to bound the rate of convergence of GGE through alternative means.  To this end, our main result is the following.

\begin{thm} \label{prop:worstCase}
Let $G = (V,E)$ denote the graph on which we are gossiping, let $x(k)$ denote the vector of GGE values after $k$ iterations, and let $\bar{x}$ denote the average vector. Then
\begin{eqnarray*}
\mathbb{E}\left[\|x(k) - \bar{x}\|^2\right] &\le& A(G)^k \|x(0) - \bar{x}\|^2,
\end{eqnarray*}
where $A(G)$ is the graph-dependent constant defined as
\begin{eqnarray*}
A(G) &=& \max_{x \ne \bar{x}} \frac{1}{n} \sum_{s =1}^n \left(1 -
\frac{\|g_s(x)\|^2}{4 \|x - \bar{x}\|^2}\right),
\end{eqnarray*}
where $g_s(x)$ refers to a subgradient of $f_s(x)$, when viewing GGE as an incremental subgradient algorithm\footnote{We explicitly note that this constant is a function of the underlying topology by writing $A(G)$. The constant $A(G)$ is completely determined by the neighborhood structure of the network because the maximization is over all $x$. For a fixed $x$, the subgradients are determined by the neighborhood structure.}.  Moreover, the $\epsilon$-averaging time for GGE is bounded above by
\begin{eqnarray*}
T_{ave}(\epsilon) &\le& \frac{3 \log \epsilon^{-1}}{\log A(G)^{-1}}.
\end{eqnarray*}
\end{thm}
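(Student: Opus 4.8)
The plan is to convert the exact per-iteration identity already established in the proof of Theorem~1 into a contraction in expectation, by conditioning on the randomly activated node. First I would recall that, conditioned on the activated node $s_k = s$ and on the current iterate $x(k-1)$, the greedy update obeys $\|x(k) - \bar{x}\|^2 = \|x(k-1) - \bar{x}\|^2 - \frac{1}{4}\|g_s(x(k-1))\|^2$, since identity \eqref{eqn:gge_errupdate} holds deterministically once $s_k$ and $t_k$ are fixed, and $t_k$ is exactly the greedy argmax defining the subgradient $g_s$. Because $s_k$ is drawn uniformly from $\{1,\dots,n\}$ independently of $x(k-1)$, averaging over $s_k$ while conditioning on $x(k-1)$ gives
\begin{equation*}
\mathbb{E}\big[\|x(k) - \bar{x}\|^2 \,\big|\, x(k-1)\big] = \|x(k-1) - \bar{x}\|^2 - \frac{1}{4n}\sum_{s=1}^n \|g_s(x(k-1))\|^2.
\end{equation*}

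Next, for any realization with $x(k-1) \ne \bar{x}$ I would factor out $\|x(k-1) - \bar{x}\|^2$ to rewrite the right-hand side as $\|x(k-1) - \bar{x}\|^2 \cdot \frac{1}{n}\sum_{s=1}^n \big(1 - \frac{\|g_s(x(k-1))\|^2}{4\|x(k-1) - \bar{x}\|^2}\big)$. The key observation is that the bracketed average is precisely the quantity maximized in the definition of $A(G)$, evaluated at the particular point $x(k-1)$; hence it is bounded above by $A(G)$ pointwise, for every realization of $x(k-1)$. This yields the one-step contraction $\mathbb{E}[\|x(k) - \bar{x}\|^2 \mid x(k-1)] \le A(G)\,\|x(k-1) - \bar{x}\|^2$, which also holds trivially when $x(k-1) = \bar{x}$ since both sides vanish. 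Applying the tower property removes the conditioning to give $\mathbb{E}[\|x(k) - \bar{x}\|^2] \le A(G)\,\mathbb{E}[\|x(k-1) - \bar{x}\|^2]$, and iterating from $k$ down to $0$ produces the claimed bound $\mathbb{E}[\|x(k) - \bar{x}\|^2] \le A(G)^k \|x(0) - \bar{x}\|^2$.

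For the $\epsilon$-averaging time I would follow the argument of Theorem~3 in \cite{boyd06}: apply Markov's inequality to the nonnegative quantity $\|x(k) - \bar{x}\|^2$ to obtain $\Pr(\|x(k) - \bar{x}\| \ge \epsilon \|x(0) - \bar{x}\|) \le A(G)^k / \epsilon^2$, then require $A(G)^k/\epsilon^2 \le \epsilon$, i.e.\ $A(G)^k \le \epsilon^3$. Taking logarithms and solving for $k$ gives $k \ge 3\log\epsilon^{-1} / \log A(G)^{-1}$, which is exactly the stated bound on $T_{ave}(\epsilon)$.

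I expect the only real subtlety to lie in justifying that $A(G)$ is a genuine constant strictly below $1$, so that the averaging-time bound is finite and meaningful. Here I would note that the objective $\frac{1}{n}\sum_s \big(1 - \|g_s(x)\|^2/(4\|x-\bar{x}\|^2)\big)$ is invariant under scaling of $x - \bar{x}$, so the maximization may be restricted to the compact sphere $\{x : \|x - \bar{x}\| = 1\}$; connectivity of $G$ guarantees that whenever $x \ne \bar{x}$ at least one edge has unequal endpoints, forcing the corresponding $\|g_s(x)\|^2 > 0$ and hence at least one summand strictly below $1$, so $A(G) < 1$. The one remaining wrinkle is that $g_s(x)$ jumps where the greedy argmax is non-unique, making the objective only upper semicontinuous rather than continuous; I would therefore state $A(G)$ as a supremum (argued to be attained on the sphere by upper semicontinuity), which in any case does not disturb the pointwise bounding step that drives the main inequality.
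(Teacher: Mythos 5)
Your proposal is correct and follows essentially the same route as the paper's proof: both condition on $x(k-1)$, use the exact one-step identity \eqref{eqn:gge_errupdate} averaged over the uniformly chosen $s_k$ to bound the conditional contraction factor $\mathbb{E}\left[1 - \frac{\|g(k)\|^2}{4\|x(k-1)-\bar{x}\|^2} \,\middle|\, x(k-1)\right]$ pointwise by $A(G)$, iterate via successive conditioning, and conclude with the identical Markov-inequality argument for $T_{ave}(\epsilon)$. Your closing observations on scale invariance, strict contraction ($A(G)<1$ for connected graphs), and the max-versus-sup subtlety at points where the greedy argmax is non-unique go beyond what the paper discusses, but they refine rather than alter the argument.
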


\begin{remark} Note that the constant $A(G)$ only depends on the topology of the graph. This constant plays a role for GGE similar to that played by the second-largest eigenvalue of $\eW$ for regular gossip algorithms.
\end{remark}

\begin{proof}[Proof of Theorem~\ref{prop:worstCase}] The proof of the first part of Theorem \ref{prop:worstCase} is based on an approach introduced in \cite{burnashev74} and developed in \cite{castro07} for analyzing data-adaptive algorithms.  We begin by recalling the recursion for the mean squared error of GGE after $k$ iterations expressed in~\eqref{eqn:gge_errupdate}:
\begin{eqnarray*}
\|x(k) - \bar{x}\|^2 &=& \|x(k-1) - \bar{x}\|^2 -\frac{1}{4}\|g(k)\|^2 \\
&=& \left(1 - \frac{\|g(k)\|^2}{4 \|x(k-1) - \bar{x}\|^2}\right)\|x(k-1) - \bar{x}\|^2,
\end{eqnarray*}
where $g(k)$ denotes the subgradient at iteration $k$ (when viewing GGE as a randomized incremental subgradient algorithm), and is a random quantity, depending on which node $s(k)$ is activated at iteration $k$.  Let $M(k) = \|x(k) - \bar{x}\|^2$ denote the error after $k$ iterations, and let $N(k) = 1 - \frac{\|g(k)\|^2}{4 \|x(k-1) - \bar{x}\|^2}$ denote the amount of contraction at iteration $k$.  Using these definitions and some successive conditioning, we get
\begin{eqnarray*}
\mathbb{E}[M(k)] &=& \mathbb{E}[N(k)M(k-1)] \\
&=& \mathbb{E}[\mathbb{E}[N(k) M(k-1) | x(k-1)]]\\
&=& \mathbb{E}[M(k-1) \mathbb{E}[N(k) | x(k-1)]]\\
&\vdots& \\
&=& M(0) \mathbb{E}[\mathbb{E}[N(1) | x(0)] \cdots \mathbb{E}[N(k) |
x(k-1)]].
\end{eqnarray*}
Note that $A(G)$ is defined in such a way that $\mathbb{E}\left[N(k) | x(k-1)\right] \le A(G)$ for all $k$.  Therefore, it follows that \begin{eqnarray*} \mathbb{E}[\|x(k) - \bar{x}\|^2] &\le& A(G)^k \|x(0) - \bar{x}\|^2. \end{eqnarray*} Next, we prove the second part of the claim: the bound on $\epsilon$-averaging time.  To do this, we will use the bound we have just derived to develop an upper bound on $\Pr(\|x(k) - \bar{x}\| \ge \epsilon \|x(0) - \bar{x}\|)$, the probability that after $k$ iterations we are still more than a factor of $\epsilon$ away from the initial error.  Applying Markov's inequality and the bound we just derived for $\mathbb{E}[\|x(k) - \bar{x}\|^2]$, we have
\begin{eqnarray*}
\Pr\left(\|x(k) - \bar{x}\| \ge \epsilon \|x(0) - \bar{x}\|\right) &=& \Pr\left(\|x(k) - \bar{x}\|^2 \ge \epsilon^2 \|x(0) - \bar{x}\|^2\right) \\
&\le~& \frac{\mathbb{E}[\|x(k) - \bar{x}\|^2]}{\epsilon^2 \|x(0) -\bar{x}\|^2}\\
&\le~& \epsilon^{-2} A(G)^k.
\end{eqnarray*}
 To get an upper bound on $T_{ave}(\epsilon)$, first note that $\Pr\left(\|x(k) - \bar{x}\| \ge \epsilon \|x(0) - \bar{x}\|\right) \le \epsilon$ provided that $k \ge \frac{3 \log \epsilon^{-1}}{\log A(G)^{-1}}$.  Since in the first part of our proposition, the bound on $\mathbb{E}[\|x(k) - \bar{x}\|^2]$ is based on a worst-case one-step analysis, it is an upper bound on the mean squared error at iteration k, effectively a lower bound on the rate of convergence.  Therefore, we have an upper bound on the $\epsilon$-averaging time for GGE; that is $T_{ave}(\epsilon) \le \frac{3 \log \epsilon^{-1}}{\log A(G)^{-1}}$. \end{proof}

Theorem~\ref{prop:worstCase} provides a direct link between the rate of convergence of GGE and the underlying network topology through the constant, $A(G)$.  This motivates further study of how $A(G)$ behaves for different classes and sizes of network topologies.  Next, we derive a lower-bound on $A(G)$ as a function of the maximum degree of the network, $d_{\max} = \max_i |\N_i|$.  Then we apply this result to characterize $A(G)$ for two-dimensional grid and random geometric graph topologies.

\begin{thm} \label{prop:MSE_LB}
Let $G$ be a graph with $n$ nodes and maximum degree $d_{\max}$.  As above, let $\eW$ denote the expected update for one step of randomized gossip on $G$.  There exists a vector $x \in \Re^n$ with corresponding average consensus vector $\bar x$ such that
\begin{equation}
\frac{\E\|W^{GGE} x - \bar x\|^2}{\|x - \bar x\|^2} \geq
(1-d_{\max}(1-\lambda_2\left(\eW\right))),
\end{equation}
and this implies a lower-bound for $A(G)$,
\begin{equation}
A(G) \geq 1-d_{\max}(1-\lambda_2\left(\eW\right)).
\end{equation}
\end{thm}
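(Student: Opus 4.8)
The plan is to prove the existence claim by exhibiting a single explicit vector that attains the stated ratio, and then obtain the bound on $A(G)$ for free: since $A(G) = \max_{x \ne \bar x} \E\|\bW^{GGE}x - \bar x\|^2 / \|x - \bar x\|^2$ by its definition in Theorem~\ref{prop:worstCase}, any particular $x$ lower bounds $A(G)$. The natural candidate is $v_2$, the eigenvector of $\eW$ associated with $\lambda_2(\eW)$, because this is precisely the input for which the randomized-gossip bound \eqref{eqn:RG_bound} is tight (as noted in the remark following Theorem~\ref{th:param}).

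First I would write the expected one-step contraction of GGE explicitly. Averaging the per-step identity \eqref{eqn:gge_errupdate} over the uniform choice of the active node, and using $\|g(k)\|^2 = 2(x_{s_k} - x_{t_k})^2$ together with the greedy rule \eqref{eqn:ggemax}, gives
\begin{equation*}
\E\|\bW^{GGE} x - \bar x\|^2 = \|x - \bar x\|^2 - \frac{1}{2n}\sum_{s=1}^n \max_{t \in \N_s}(x_s - x_t)^2 .
\end{equation*}
To lower bound the residual error I must upper bound the greedy improvement term, and the key (elementary) combinatorial step is the chain
\begin{equation*}
\max_{t \in \N_s}(x_s - x_t)^2 \le \sum_{t \in \N_s}(x_s - x_t)^2 \le \frac{d_{\max}}{|\N_s|}\sum_{t \in \N_s}(x_s - x_t)^2 ,
\end{equation*}
which replaces the greedy maximum by $d_{\max}$ times the neighbor average that appears in randomized gossip. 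Summing over $s$ and identifying the right-hand side through the randomized-gossip recursion \eqref{eqn:oneRGstep} yields
\begin{equation*}
\frac{\E\|\bW^{GGE} x - \bar x\|^2}{\|x - \bar x\|^2} \ge 1 - d_{\max}\left(1 - \frac{\E\|\bW^{RG} x - \bar x\|^2}{\|x - \bar x\|^2}\right) .
\end{equation*}

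Finally I would substitute $x = v_2$. Since $v_2 \perp \mathbf{1}$ we have $\bar x = 0$, and by the tightness of \eqref{eqn:RG_bound} at the second eigenvector the randomized-gossip ratio equals exactly $\lambda_2(\eW)$ (this can also be derived directly: each randomized-gossip matrix is symmetric and idempotent, so $\E[(\bW^{RG})^T\bW^{RG}] = \eW$ and $\E\|\bW^{RG}x - \bar x\|^2 = (x-\bar x)^T \eW (x-\bar x)$). This produces the first claimed inequality, and the bound on $A(G)$ then follows from the maximizing definition of $A(G)$. I expect the main obstacle to be purely one of bookkeeping rather than depth: keeping every inequality oriented so that bounding the greedy term \emph{from above} genuinely yields a \emph{lower} bound on the error ratio, and being careful that the factor $d_{\max}$ enters through $1/|\N_s| \ge 1/d_{\max}$ in the correct direction. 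Choosing $v_2$ so that the randomized-gossip comparison is exact (not merely an inequality) is what makes the final constant come out as $1 - d_{\max}(1-\lambda_2(\eW))$ rather than something weaker.
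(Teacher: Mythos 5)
Your proposal is correct and follows essentially the same route as the paper's proof: both evaluate the ratio at the second eigenvector of $\eW$, both use the chain $\max_{t\in\N_s}(x_s-x_t)^2 \le \sum_{t\in\N_s}(x_s-x_t)^2 \le \frac{d_{\max}}{|\N_s|}\sum_{t\in\N_s}(x_s-x_t)^2$ to dominate the greedy improvement by $d_{\max}$ times the randomized-gossip improvement, and both exploit tightness of the randomized-gossip recursion at that eigenvector so the constant comes out as $1-d_{\max}(1-\lambda_2(\eW))$. The only difference is bookkeeping: the paper routes the argument through the quantity $\xi$ from Theorem~\ref{th:param} and then applies Theorem~\ref{prop:worstCase} with $k=1$, whereas you work directly with the one-step identity \eqref{eqn:gge_errupdate} and the maximizing definition of $A(G)$, which is the same computation.
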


The proof appears below.  We can use this result to relate the upper bounds on averaging time for GGE and standard randomized gossip.  Let $\U^{GGE}(G,\epsilon) = \frac{3 \log \epsilon^{-1}}{\log A(G)^{-1}}$ denote the upper bound on the averaging time of GGE obtained in Theorem~\ref{prop:worstCase}, and let $\U^{RG}(G,\epsilon) = \frac{3 \log \epsilon^{-1}}{\log \lambda_2(\eW)^{-1}}$ denote the corresponding upper bound on the averaging time of randomized gossip \cite{boyd06}.  Using two inequalities for the logarithm --- namely, $a < -\log(1-a)$ for $a \in (0,1)$, and $\log \lambda \le \lambda - 1$ for $\lambda > 0$ --- we obtain
\begin{equation}
\U^{GGE}(G,\epsilon) \ge \frac{3 \log \epsilon^{-1}}{d_{\max} (1 - \lambda_2(\eW))} \ge \U^{RG}(G,\epsilon) / d_{\max}.
\end{equation}
In words, the upper bound on the averaging time of GGE is at most a factor of $d_{\max}$ better than the upper bound for randomized gossip.  Of course, this only links the upper bounds of the two algorithms and does not directly relate their actual performance.  However, simulation results presented in the next section indicate that this relationship indeed captures the improvements seen for GGE over randomized gossip.  Moreover, recall that the bounds on the expected improvement after a single gossip iteration are tight for both GGE and randomized gossip.  The bound for GGE in Theorem~\ref{prop:worstCase} becomes an equality for $k=1$ when $x(0)$ is taken to be the $x$ that solves the optimization problem defining $A(G)$.  Similarly, the bound for randomized gossip becomes an equality when $x(0)$ is taken to be the eigenvector corresponding to the second largest eigenvalue of $\eW$.

We are interested in understanding the performance of GGE for applications primarily in wireless networks.  Random geometric graphs, first introduced in \cite{gupta00}, are commonly used to model connectivity in wireless networks for the purpose of analyzing scaling behaviour of algorithms.  A random geometric graph on $n$ nodes is obtained by uniformly assigning each node i.i.d.~coordinates in the unit square and then connecting nodes whose distance is less than connectivity radius $r(n)$.  In this paper we adopt the common scaling $r(n) = \sqrt{\frac{2 \log n}{n}}$, which guarantees the network is connected with high probability \cite{gupta00}.

For a random geometric graph with $n$ nodes, it is known~\cite{boyd06} that, for $r(n)$ as given above, every node has $2\pi \log(n) (1 - o(1))$ neighbors with high probability.  Thus, with $d_{\max} = 2\pi \log(n) (1 - o(1))$, we see that GGE gives essentially a factor of $\log n$ improvement in averaging time over randomized gossip.  For a two-dimensional grid $d_{\max} = 4$, and GGE gives only a constant improvement in averaging time.  These results are illustrated via simulation in the next section.

\begin{proof}[Proof of Theorem~\ref{prop:MSE_LB}]
Our starting point is \eqref{eqn:xi_k} from Theorem~\ref{th:param}.  Since we are focusing on the effect of applying a single gossip iteration, we drop the time index $i$ in \eqref{eqn:xi_k} to simplify the notation:
\begin{align}
\xi &=\frac{ \mathbb{E}\left[\sum\limits_{s=1}^n \left(\max\limits_{t \in \N_s} \big(x_s - x_t\big)^2\right) -\sum\limits_{s=1}^n \left(\frac{1}{|\N_s|} \sum\limits_{t \in \N_s} \big(x_s -x_t\big)^2\right)\right]} {2n\,\, \mathbb{E}\left[\|x - \bar x\|^2\right]} \ge 0, \label{eqn:xi}
\end{align}
Using the fact that the maximum of a set of non-negative values is always less than or equal to the sum of those values, we can write
\begin{eqnarray}
\mathbb{E}\left[\max_{t \in \N_s} (x_s - x_t)^2\right] &\le& \frac{|\N_s|}{|\N_s|}~\mathbb{E}\left[\sum_{t\in \N_s} (x_s - x_t)^2\right] \\
&\le& \frac{d_{\max}}{|\N_s|} ~\mathbb{E}\left[\sum_{t\in \N_s} (x_s - x_t)^2\right].
\end{eqnarray}
Therefore we can upper bound $\xi$ by
\begin{equation}
\xi \leq \frac{(d_{\max}-1)~\mathbb{E}\left[\sum\limits_{s=1}^n \left(\frac{1}{|\N_s|} \sum\limits_{t \in \N_s} \big(x_s -x_t\big)^2\right)\right]} {2n\,\, \mathbb{E}[\|x - \bar x\|^2]}. \label{pro:upperxi}
\end{equation}
Next, take $x$ to be the eigenvector corresponding to the second largest eigenvalue of $\eW$ and equate $W^{RG}(1:k-1)x(0) = x$ in \eqref{eqn:oneRGstep} to get
\begin{eqnarray}
\mathbb{E}\left[{-}\frac{1}{2n} \sum_{s=1}^n \frac{1}{|\N_s|} \sum_{t \in \N_s} (x_s - x_t)^2\right] &\le& - (1-\lambda_2(\eW)) \E[\|x - \bar x\|^2].
\end{eqnarray}
Applying this inequality in \eqref{pro:upperxi} gives
\begin{equation}
\xi \leq (d_{\max}-1)(1-\lambda_2(\eW)).  \label{eqn:xidmax}
\end{equation}

Observe that the one-step bound obtained by taking $k=1$ in~\eqref{eqn:GGE_bound} is tight. In particular, for our choice of $x$ as the eigenvector corresponding to the
second largest eigenvalue of $\eW$, we have equality in \eqref{eqn:GGE_bound}:
\begin{align}
\mathbb{E}\left[\| W^{GGE} x - \bar x\|^2\right] = \|x - \bar x\|^2
\left(\lambda_2\left(\eW\right) - \xi \right). \label{eqn:GGE_boundeq}
\end{align}
Inserting (\ref{eqn:xidmax}) into (\ref{eqn:GGE_boundeq}) leads to the
first claim in the proposition.  Then, combining this bound with the first inequality in Theorem~\ref{prop:worstCase} for the case $k=1$ and $x(0) = x$ yields the desired lower bound on $A(G)$.
\end{proof}

\section{Numerical Simulations}
\label{sec:sims}

In this section we report the results of simulations conducted to compare the performance of GGE with randomized gossip~\cite{boyd06} and geographic gossip~\cite{dimakis06} for a variety of initial conditions. We also compare the empirically achieved convergence rates to the bound established in Section~\ref{sec:bound} and investigate how this bound behaves as the number of nodes in the network grows.

\subsection{Comparison of Convergence Rates}

We first compare the convergence rates of GGE with randomized gossip and geographic gossip by examining the reduction they achieve in relative error,~$||x(k)-\bar{x}|| \over ||x(0)-\bar{x}|| $, as a function of the number of transmissions (communication complexity). Since the number of transmissions per iteration is different for each algorithm, this is a fairer comparison than examining convergence rate relative to the number of iterations.  Randomized gossip requires two wireless transmissions per iteration,  GGE requires three transmissions (see the discussion in Section~\ref{sec:alg}), and geographic gossip has a variable number of transmissions per iteration, depending on the number of hops between the gossiping nodes. We simulate networks with random geometric graph topologies, and all figures show averages over 100 realizations of the random geometric graph. We examine performance for four different initial conditions, $x(0)$, in order to explore the impact of the initial values on performance. The first two of these cases are a Gaussian bumps field, and a linearly-varying field. For these two cases, the initial value $x(0)$ is determined by sampling these fields at the locations of the nodes. The remaining two initializations consist of the ``spike'' signal, constructed by setting the value of one random node to 1 and all other node values to 0, and a random initialization where each value is i.i.d. drawn from a Gaussian distribution $\mathcal{N}(0,1)$ of zero mean and unit variance. The first three of these signals were also used to examine the performance of geographic gossip in~\cite{dimakis06}.

Figs.~\ref{fig:convcomp}(a)-(d) show that GGE converges to the average at a much faster rate (both initially and asymptotically) than randomized gossip for all initial conditions. The initial rate of convergence of GGE is faster than geographic gossip for all but the linearly-varying field, and for the simulated network size ($n=200$), the asymptotic rates of reduction in relative error are similar for the two algorithms. Out of these candidate initializations, the linearly-varying field is the worst case. This is not surprising since the convergence analysis conducted in Section~\ref{sec:ggevsrg} suggests that constant differences between neighbors causes GGE to provide minimal gain.

\begin{figure*}[tbp]
\centering
\subfigure[Gaussian bumps convergence rate comparison]{
\label{fig:convcompa} 
\includegraphics[height = 0.35\columnwidth,clip=true, viewport=1.3in 3.3in 7in 7.7in]{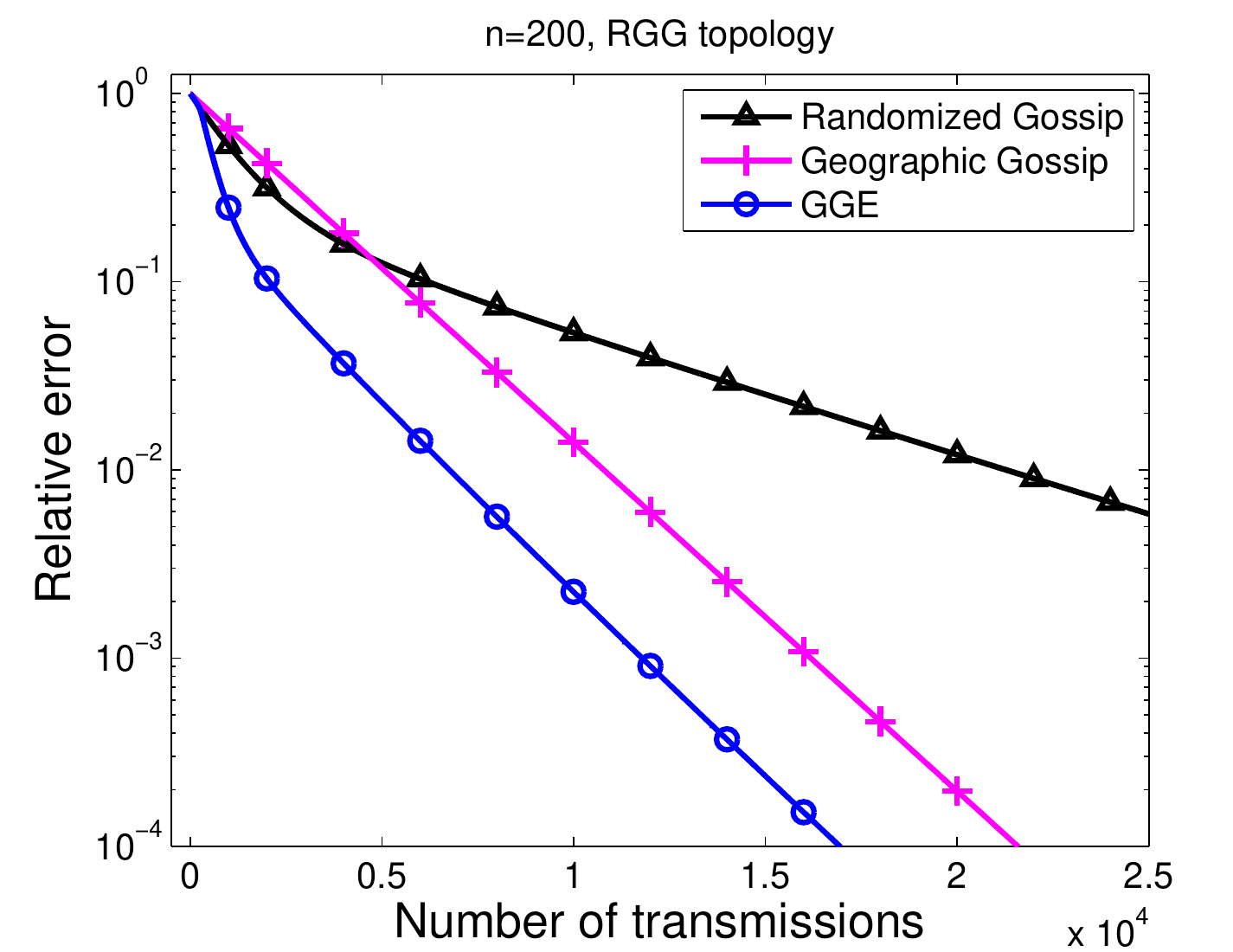}}
\hfill \subfigure[Linearly-varying field convergence rate comparison]{
\label{fig:convcompb} 
\includegraphics[height = 0.35\columnwidth,clip=true, viewport=1.3in 3.3in 7in 7.7in]{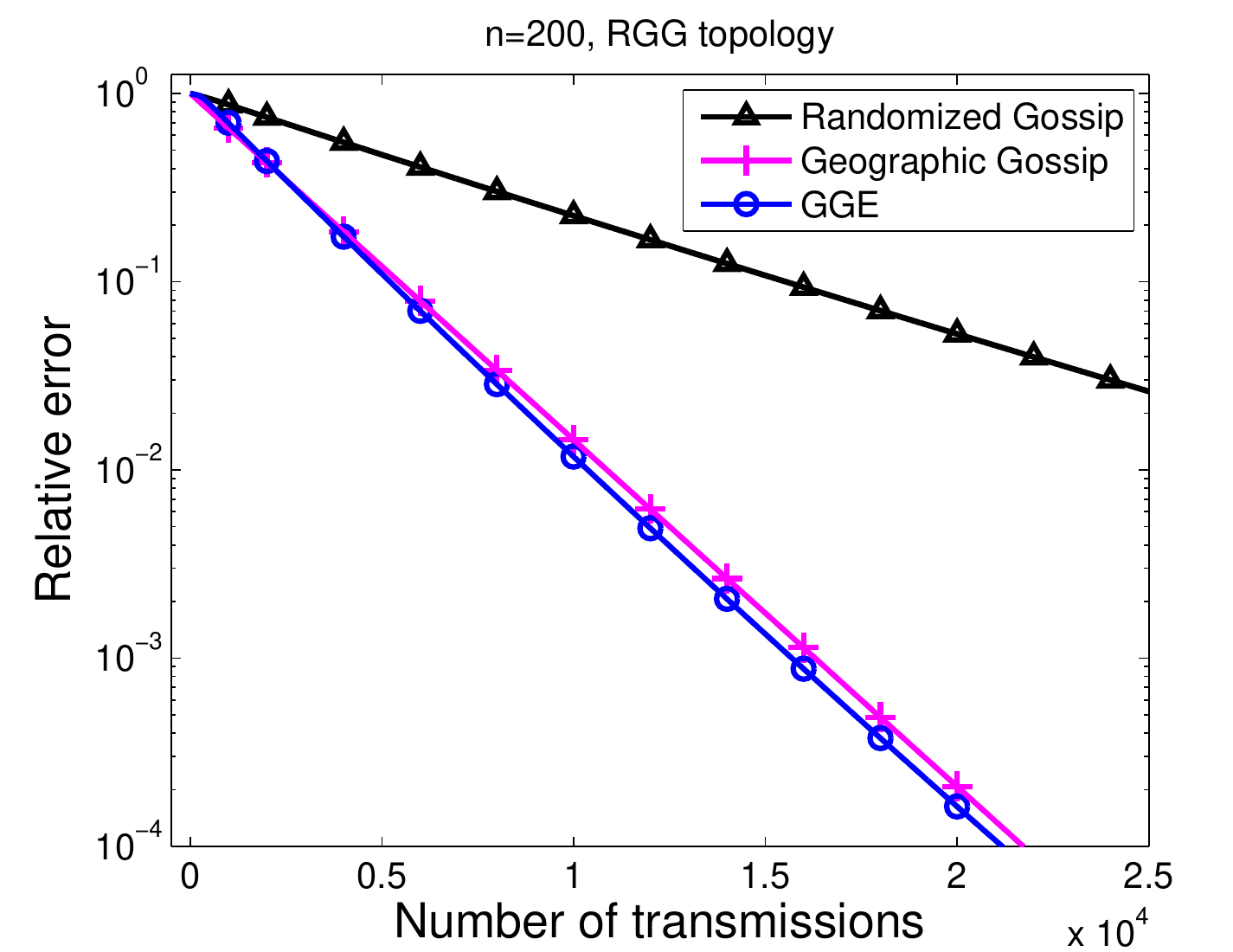}}
\hfill \subfigure[Spike convergence rate comparison]{
\label{fig:convcompc} 
\includegraphics[height = 0.35\columnwidth,clip=true, viewport=1.3in 3.3in 7in 7.7in]{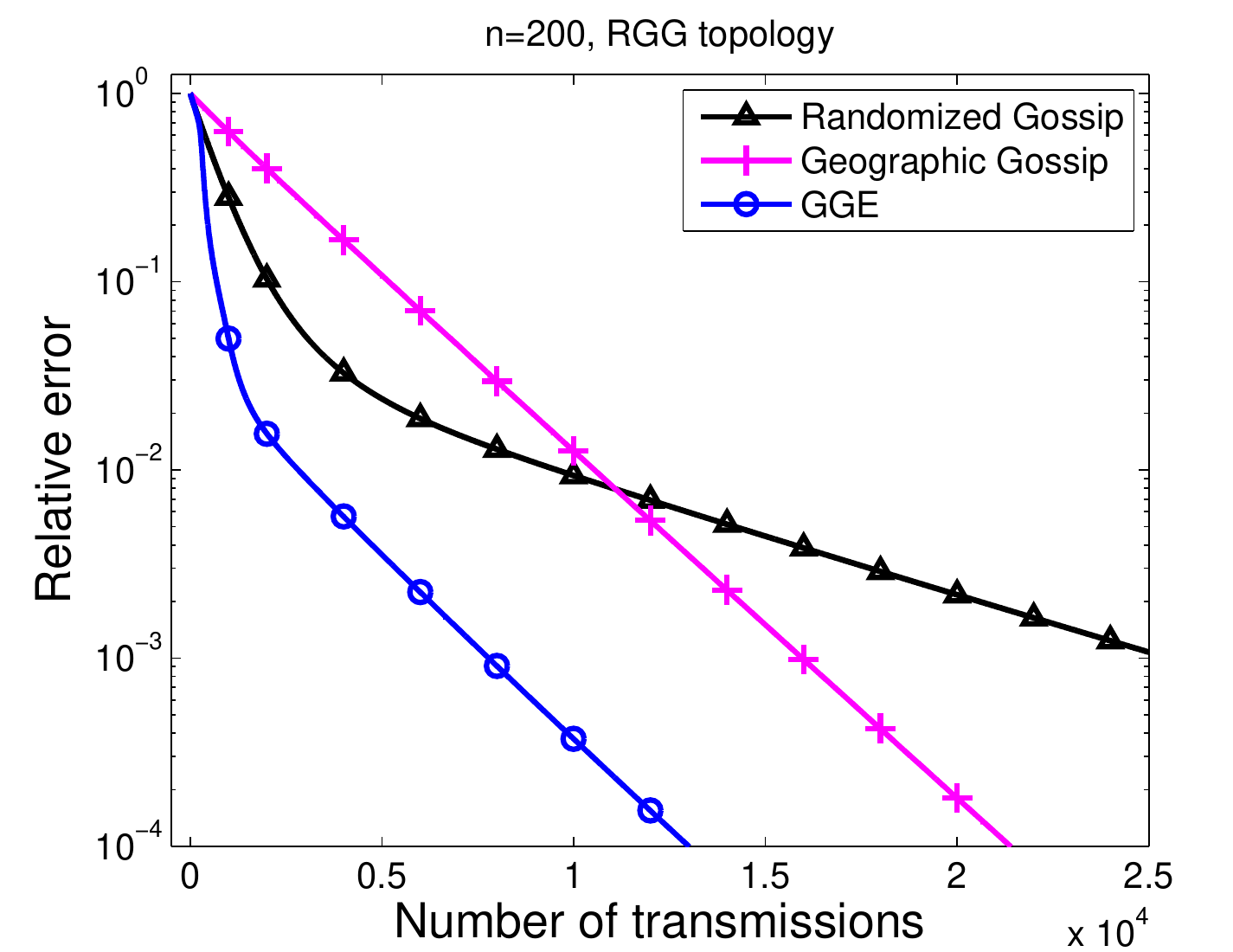}}
\hfill \subfigure[Uniform random field convergence rate comparison]{
\label{fig:convcompd} 
\includegraphics[height = 0.35\columnwidth,clip=true, viewport=1.3in 3.3in 7in 7.7in]{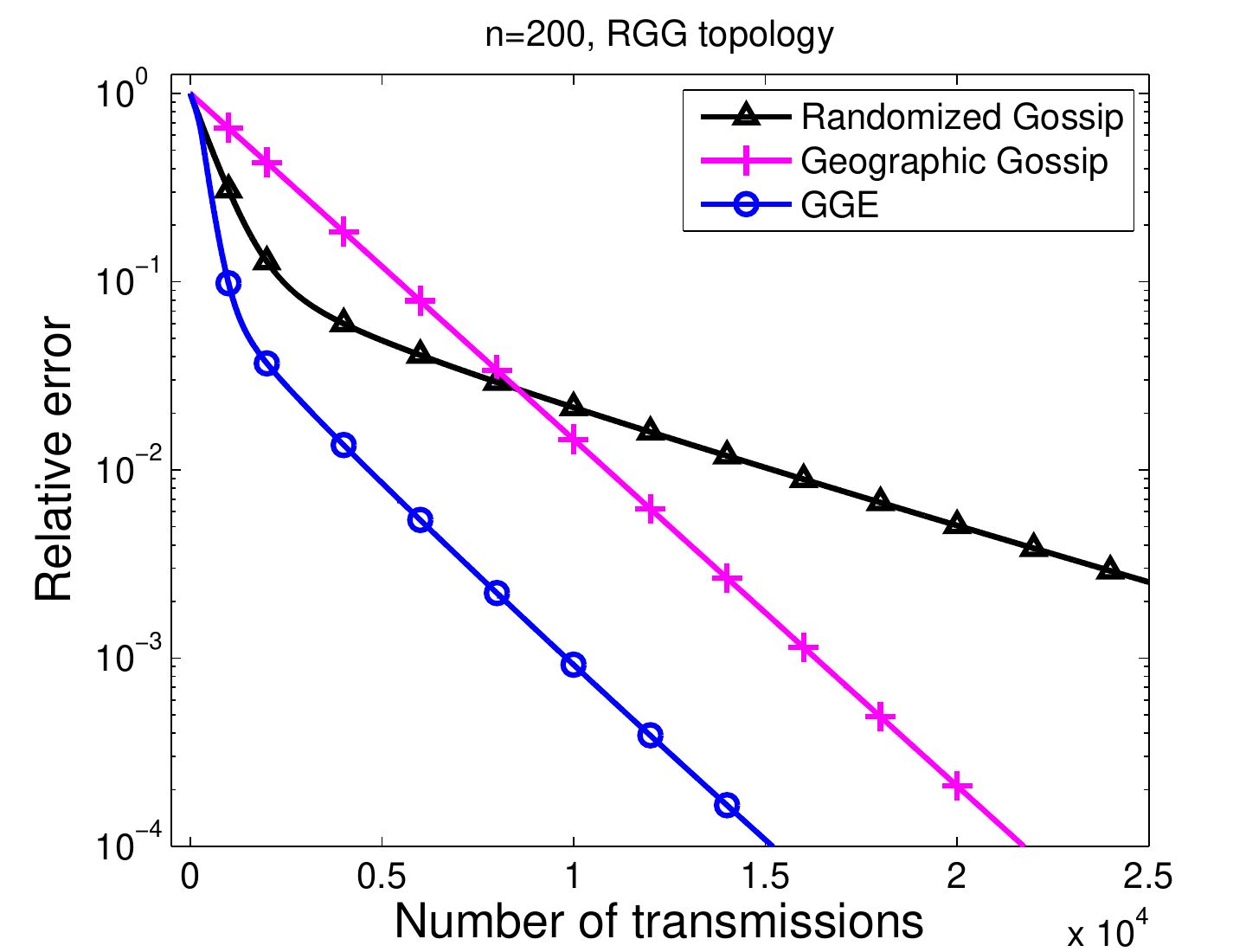}}
\caption{ A comparison of the performance of randomized gossip, GGE, and geographic gossip for four initializations of $x(0)$. Results are averaged over 100 realizations of the random geometric graph and 100 runs of the algorithm per graph. } \label{fig:convcomp}
\end{figure*}

We also compare the performances of the three gossip algorithms for the grid topology. Figure~\ref{fig:convcompGrid} shows that in grid-structured networks the performance of GGE is close to the performance of randomized gossip (constant improvement). Clearly geographic gossip has the best performance in this case. As discussed in Section~\ref{sec:bound}, the small number of neighbors in the grid topology restricts the improvement that GGE can achieve relative to randomized gossip.

\begin{figure*}[tbp]
\centering
\subfigure[Gaussian bumps convergence rate comparison]{
\label{fig:convcompGrida} 
\includegraphics[height = 0.35\columnwidth,clip=true, viewport=1.3in 3.3in 7in 7.7in]{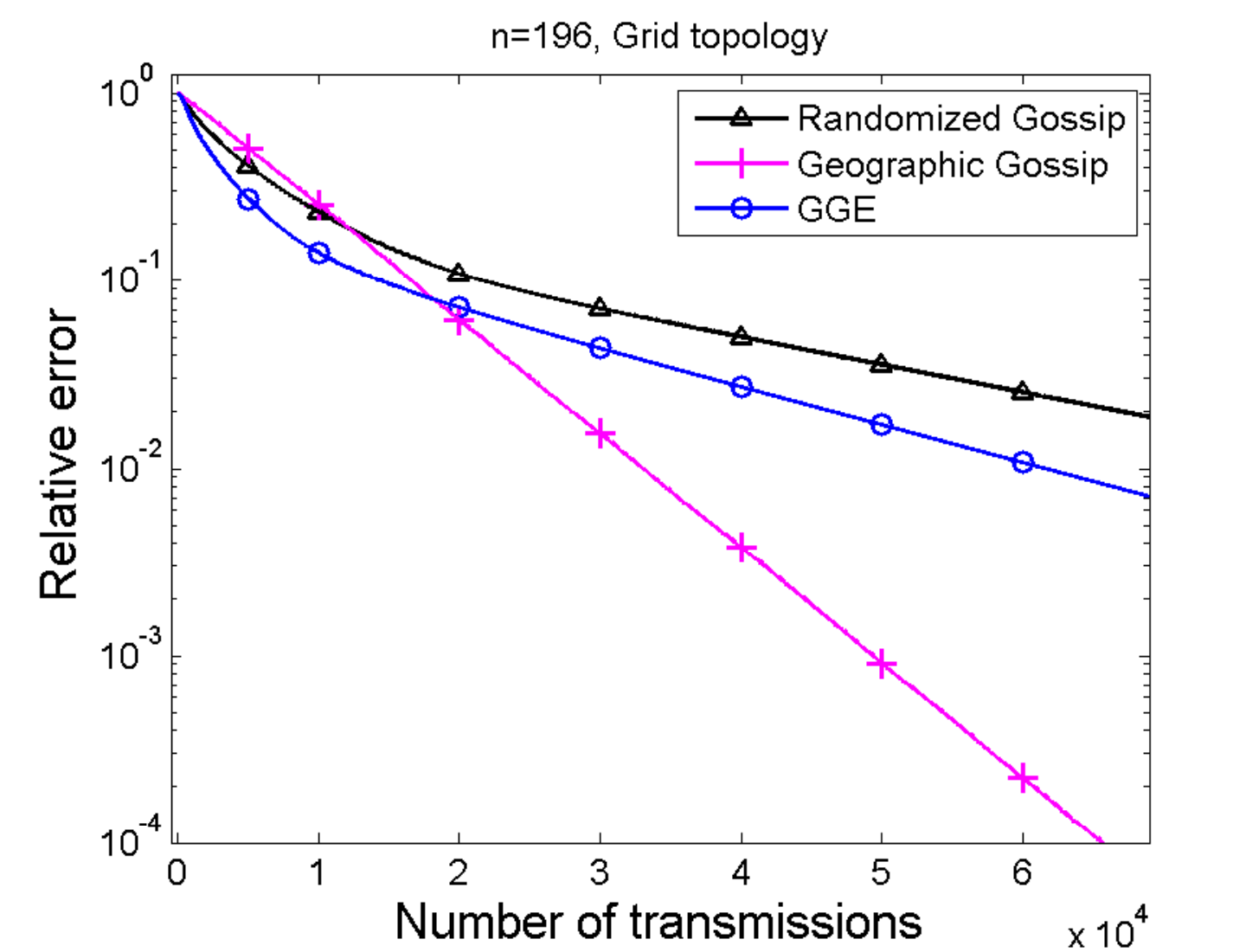}}
\hfill \subfigure[Linearly-varying field convergence rate comparison]{
\label{fig:convcompGridb} 
\includegraphics[height = 0.35\columnwidth,clip=true, viewport=1.3in 3.3in 7in 7.7in]{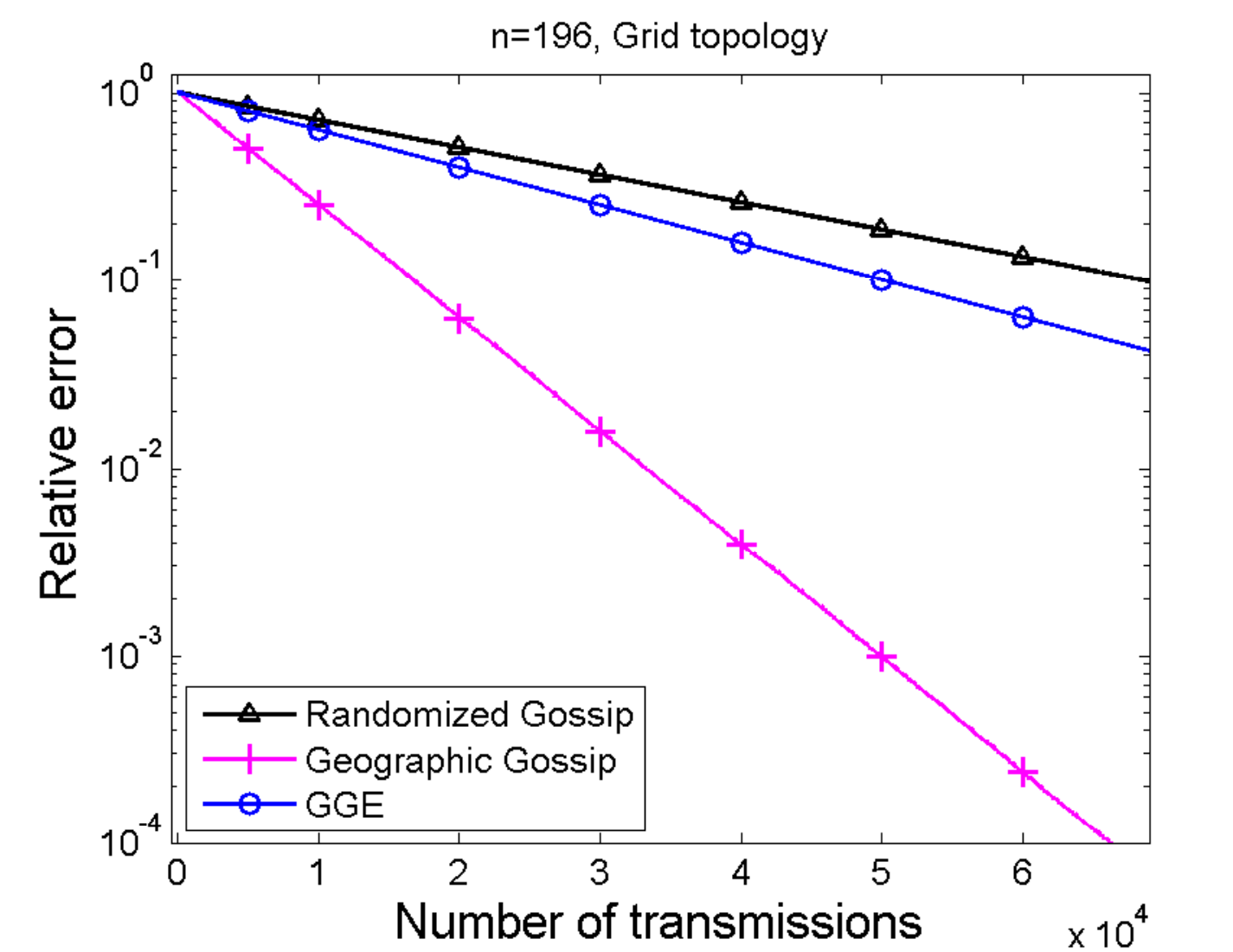}}
\caption{ A comparison of the performance of randomized gossip, GGE, and geographic gossip for two initializations of $x(0)$ in grid topology. Results are averaged over 100 runs of the algorithm. } \label{fig:convcompGrid} 
\end{figure*}

For random geometric graph topologies, the expected node degree, $\mathbb{E}[|\N_s|]$, scales as $\log n$ in contrast to constant average node degree in the case of grid topology. Therefore, GGE is able to provide better performance results compared to randomized gossip for graph topologies where average node degree increases with the number of nodes.  To improve GGE performance for topologies with low expected node degree values, we propose an extension to the algorithm. Details of this extension, which we call Multi-hop GGE, are provided in Section~\ref{sec:MH}.

\subsection{Comparison with the Theoretical Upper Bound}

We now compare the empirical average relative error for the random geometric graph with the bound developed in Theorem 3. There is no closed-form solution for $A(G)$, so we solve the optimization problem identified in Theorem 3 numerically, using an incremental subgradient algorithm. Since the cost function can be expressed as a function of $(x(k) - \bar x)/||x(k) - \bar x||$, we can focus on maximizing over $x$ satisfying $\bar x = 0$ and $||x(k)||^2 = 1$.  In this simplified setting, one can reformulate the optimization as the minimization of a convex function over a non-convex set of constraints.  We approximate the solution to this minimization using a projected incremental subgradient method.  To avoid the problem of local minima (since the constraint set is non-convex) we rerun the optimization algorithm from multiple initial conditions. Figure~\ref{fig:errorbound} shows the relative error achieved by GGE as a function of the number of iterations for different initial conditions of $x(0)$, averaged over 100 realizations of the algorithm. Also plotted is the bound identified in Theorem 3, substituting in $A(G)$ calculated numerically.  For all but the linearly-varying field, GGE achieves a much more rapid initial decrease in error than indicated by the bound. After approximately 1000 iterations, the bound provides a good indication of the rate of decrease in error.  We again observe that the linearly-varying field is close to a worst-case scenario for GGE.

\begin{figure}[h]
\centering
\includegraphics[height = 0.35\columnwidth,clip=true, viewport=1.3in 3.3in 7in 7.7in]{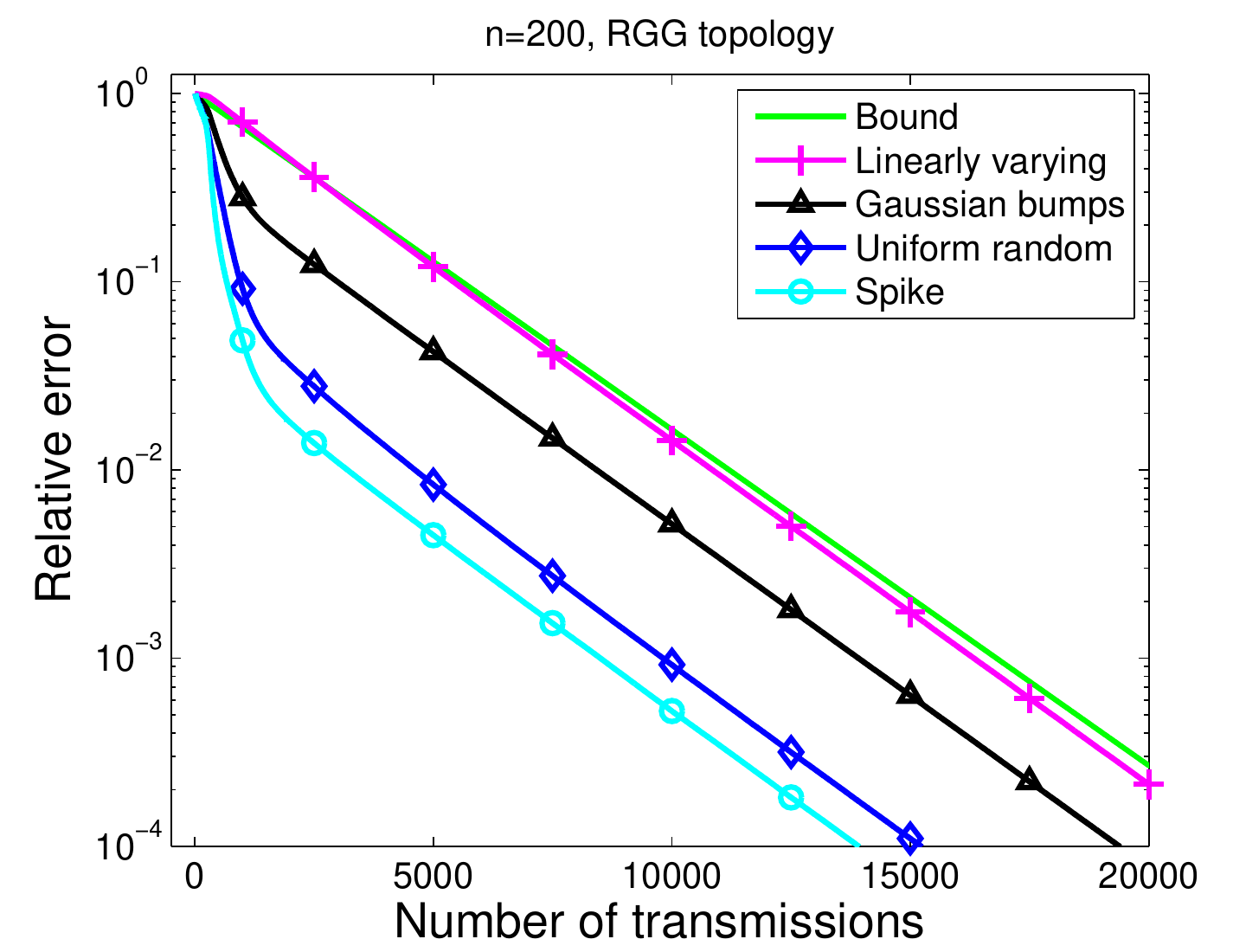}
\caption{
  A comparison of the theoretical bound on relative error and the experimental performance of GGE for four initializations. Results are for 100 realizations of the random geometric graph, averaged over 100 runs of the algorithm.} \label{fig:errorbound} 
  \end{figure}

Next, we examine how the communication complexity scales with respect to the number of nodes in the network. Figs.~\ref{fig:numnodesRGG} and \ref{fig:numnodesGRID} display how $A(G)$ and the averaging time scale as a function of the number of nodes $n$, for random geometric graph and grid topologies, respectively. To obtain the random geometric graph curve, we generate 50 random graphs for each value of $n$, and numerically evaluate $A(G)$ for each using the procedure detailed above. The top panel shows how the values of $A(G)$ change as the number of nodes increases. The bottom panel shows the $\epsilon$-averaging time, $T_{ave}(\epsilon)$, evaluated via simulation, for $\epsilon=0.01$ versus the number of nodes. Note that Figs.~\ref{fig:numnodesRGG} and \ref{fig:numnodesGRID} show the averaging time in terms of the number of iterations per node. The errorbars depict the minimum, mean and maximum values obtained for the 50 simulated graphs for each $n$. For reference, the dotted lines depict $1.5n/\log{n}$ for the random geometric graph and $2.5{n}$ for the grid topology.

\begin{figure*}[ht]
\centering
\subfigure[RGG topology]{
\label{fig:numnodesRGG} 
\includegraphics[height = 0.35\columnwidth,clip=true, viewport=1.3in 3.3in 7in 7.7in]{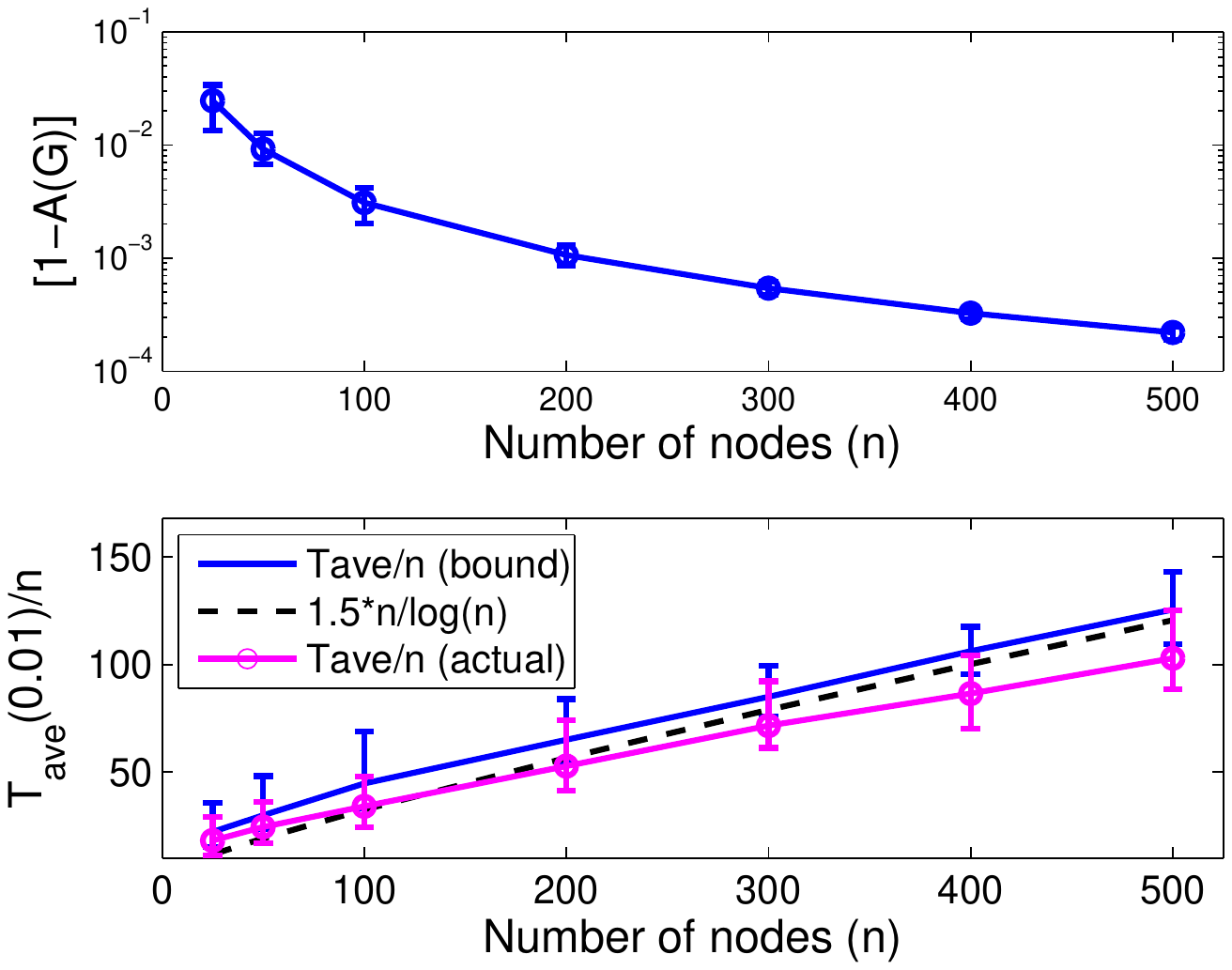}}
\hfill \subfigure[Grid topology]{
\label{fig:numnodesGRID} 
\includegraphics[height = 0.35\columnwidth,clip=true, viewport=1.3in 3.3in 7in 7.7in]{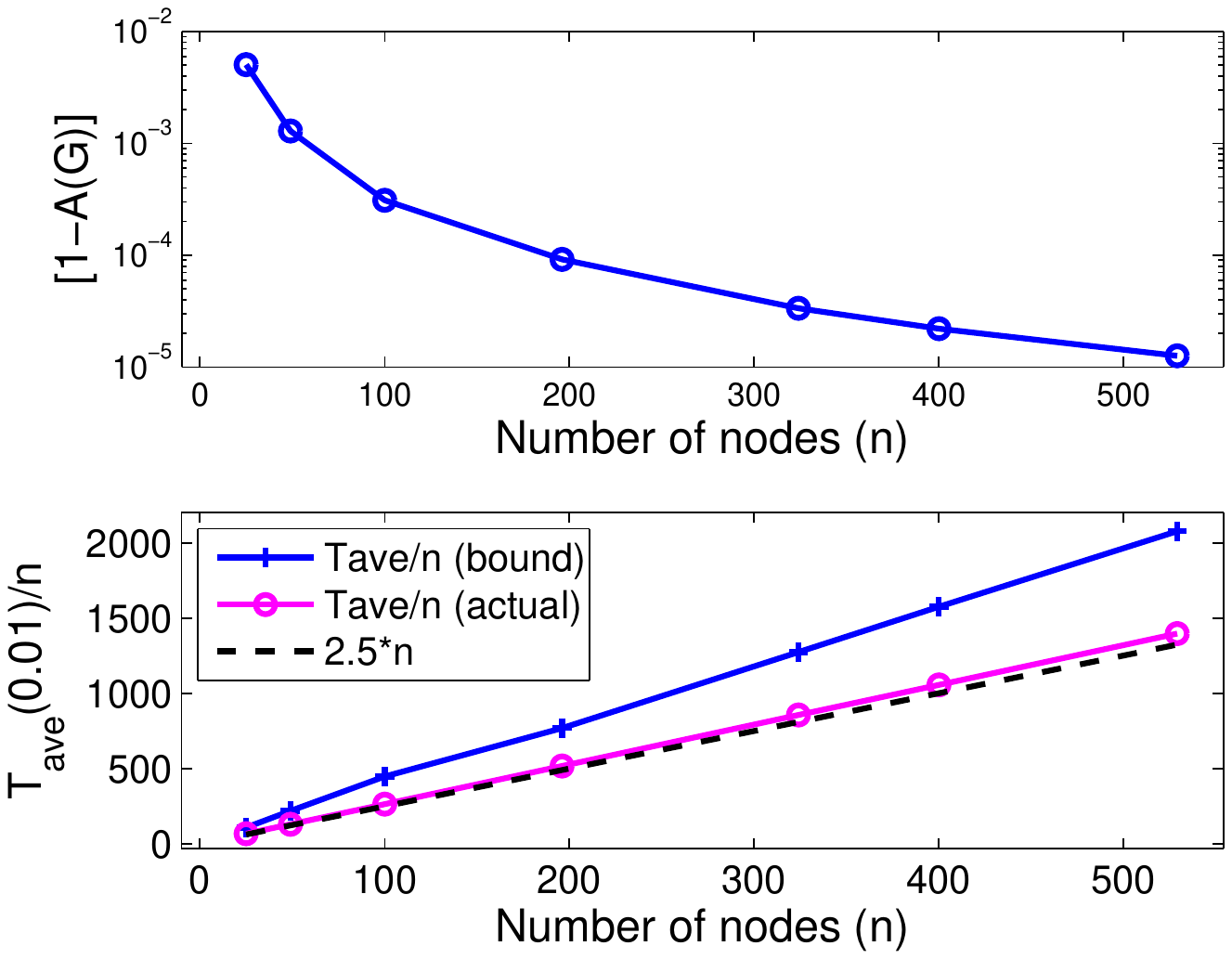}}
\caption{The scaling behavior of $A(G)$ and the averaging time $T_{ave}(\epsilon)$ for $\epsilon=0.01$ as a function of the number of nodes $n$ in the network. The top panels show $A(G)$ evaluated numerically, and the bottom panels compare $T_{ave}(0.01)$ computed via simulation and via the corresponding bound from Theorem~\ref{prop:worstCase}.  (a) 50 random geometric graphs are simulated for each value of $n$. The error bars depict the minimum, mean, and maximum values obtained over these 50 realizations. The curve $1.5n/\log{n}$ is also shown for reference. (b) For the grid topology, error bars are not used since there is only one realization of the grid for a given network size. The curve $2.5n$ is also shown for reference. } \label{fig:numnodes} 
\end{figure*}

\subsection{Stale Information}  \label{subsec:staleInfo}
In wireless networks, links can be unreliable, and for GGE, it is possible for nodes to miss some updates from their neighbors. Consequently, nodes will have stale information about their neighbors' values and therefore the greedy selection in GGE may be affected. Here we investigate the effect of stale information on the performance of GGE through a simulation study.

We consider random graph topologies with 200 nodes. The initial measurements $x(0)$ correspond to sampling the Gaussian bumps field, similar to Figure~\ref{fig:convcomp}(a).  As described in Section~\ref{sec:alg}, at the $k$th GGE iteration, two nodes $s_k$ and $t_k$ perform averaging. To provide up-to-date information to their neighbors, $s_k$ and $t_k$ broadcast their new values. We simulate the case when nodes randomly miss the broadcasted messages.  We assume that the gossiping nodes $s_k$ and $t_k$ communicate reliably, but sometimes eavesdropping nodes miss an update from their neighbor.  Specifically, each eavesdropping node independently misses the transmission from its neighbor with probability $p$.  Figure~\ref{fig:stale} illustrates the performance degradation in GGE.  Curves are shown for four different values of $p$ between 0 and 0.5, and standard randomized gossip is also shown for comparison.  We conclude that GGE provides significantly better performance than randomized gossip even when 50 percent of the broadcast messages are missed.

\begin{figure}[h] \centering \includegraphics[height = 0.35\columnwidth,clip=true, viewport=1.3in 3.3in 7in 7.7in]{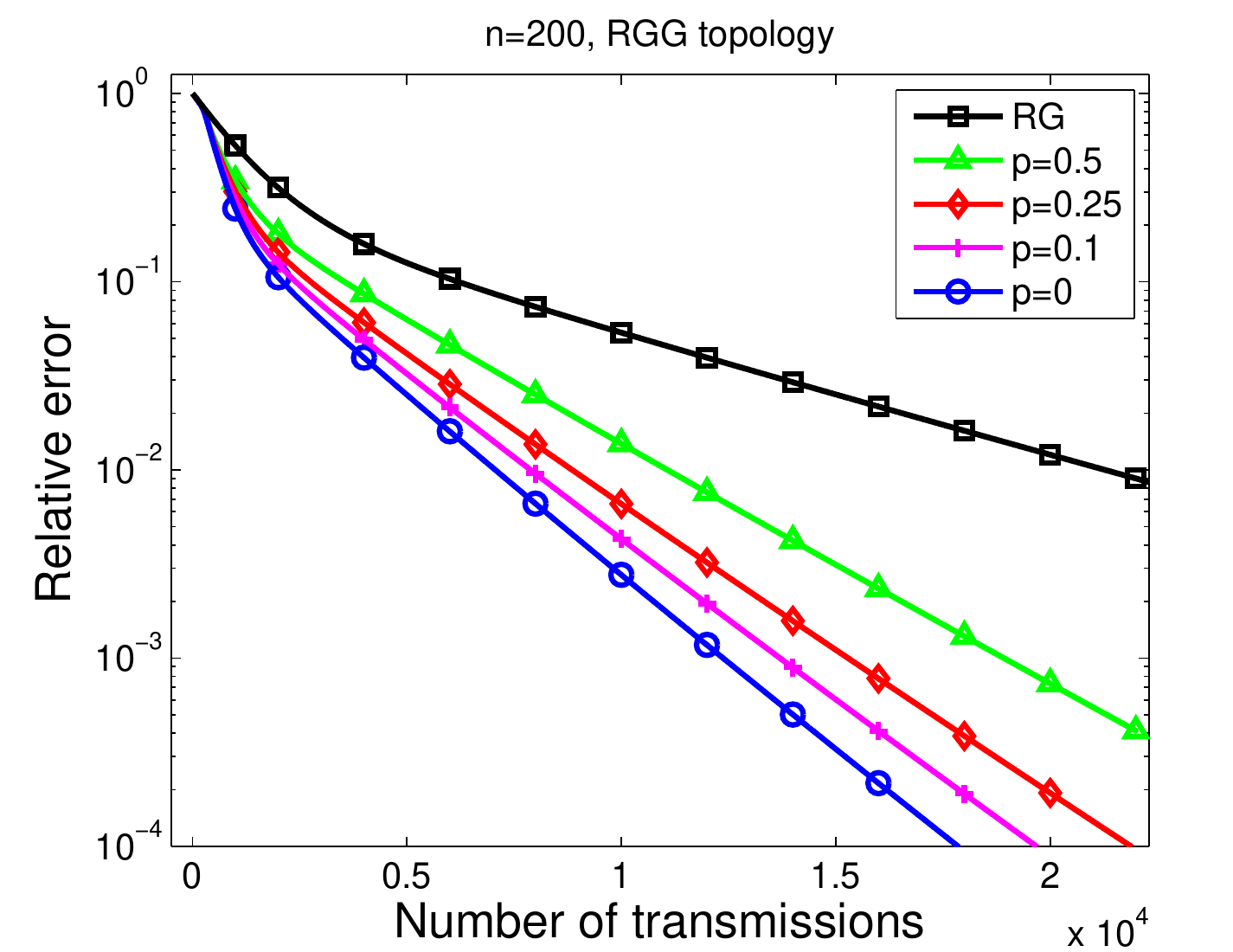} \caption{ A comparision of the performance of randomized gossip with GGE in the case of stale information due to independent link failures with probability $p$. Results are for 100 realizations of the 200-node random geometric graph, averaged over 100 runs of the algorithm. } \label{fig:stale} 
 \end{figure}

\section{Multi-hop Greedy gossip with eavesdropping}
\label{sec:MH}

As Figs.~\ref{fig:convcomp} and \ref{fig:convcompGrid} indicate, the improvement of GGE over randomized gossip is less for the grid topology compared to the RGG topology. The decrease in improvement is due to the fact that the node degree in a two-dimensional grid is bounded at 4 and does not increase with the network size.  Here we propose an extension to our algorithm that improves the performance of GGE such that it can be employed for topologies with low average node degree. Essentially, this extension allows nodes to perform greedy gossip updates with nodes beyond their immediate one-hop neighborhood.

In one-hop GGE, at the $k$th iteration, $s_k$ determines which neighbor, $t_k$, has a value most different from its own. In two-hop GGE, instead of completing the update, $t_k$ checks if any of its neighbors $u_k\in \N_{t_k}$ has a value even more different from $s_k$ than its own; i.e., $\|x_{s_k}-x_{u_k}\|>\|x_{s_k}-x_{t_k}\|$. If so, nodes $s_k$ and $u_k$ gossip; otherwise $s_k$ and $t_k$ gossip.  Multi-hop gossip generalizes this idea to even larger neighborhoods.  For example, $u_k$ can search its neighborhood, and so on.

To observe the effect of performing greedy updates over multiple hops, we conduct an experimental comparison between 1-hop, 2-hop, and 3-hop GGE.  As a point of comparison, we also include curves for randomized gossip and geographic gossip. Figure~\ref{fig:MHgrid} illustrates the results for grid and random geometric graph topologies. In the grid, 3-hop GGE achieves an asymptotic rate of reduction in relative error that is comparable to geographic gossip. In the random geometric graph topology, the asymptotic performance of 1-hop GGE is already similar to that of geographic gossip (for a network of 200 nodes) and the multi-hop versions lead to significant improvements while still limiting all gossip exchanges to be between nodes separated by at most three hops.

\begin{figure*}[tbp]
\centering
\subfigure[Linearly-varying field convergence rate comparison]{
\label{fig:MHgridslope} 
\includegraphics[height = 0.35\columnwidth,clip=true, viewport=1.3in 3.3in 7in 7.7in]{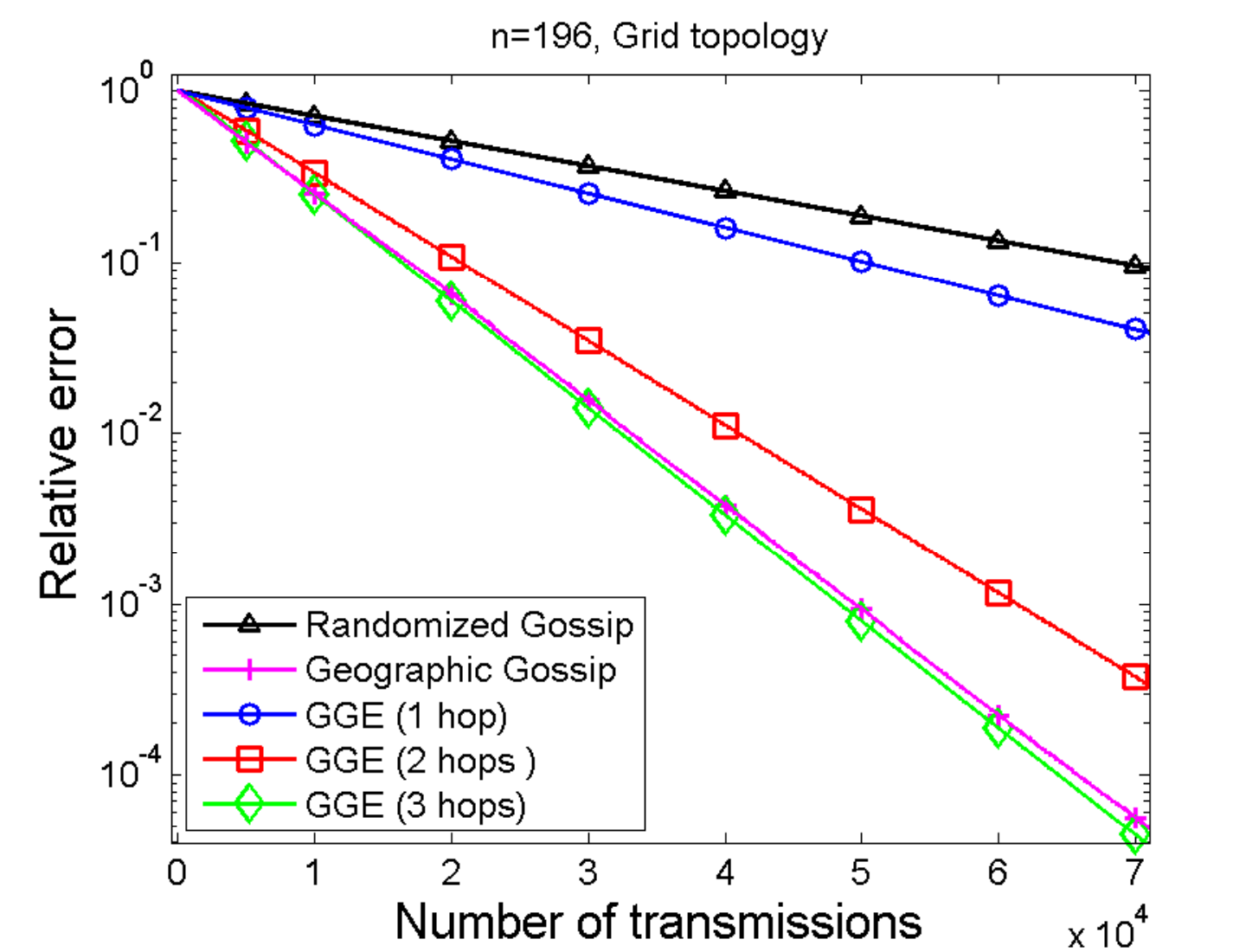}}
\hfill \subfigure[Gaussian bumps convergence rate comparison]{
\label{fig:MHgridslope} 
\includegraphics[height = 0.35\columnwidth,clip=true, viewport=1.3in 3.3in 7in 7.7in]{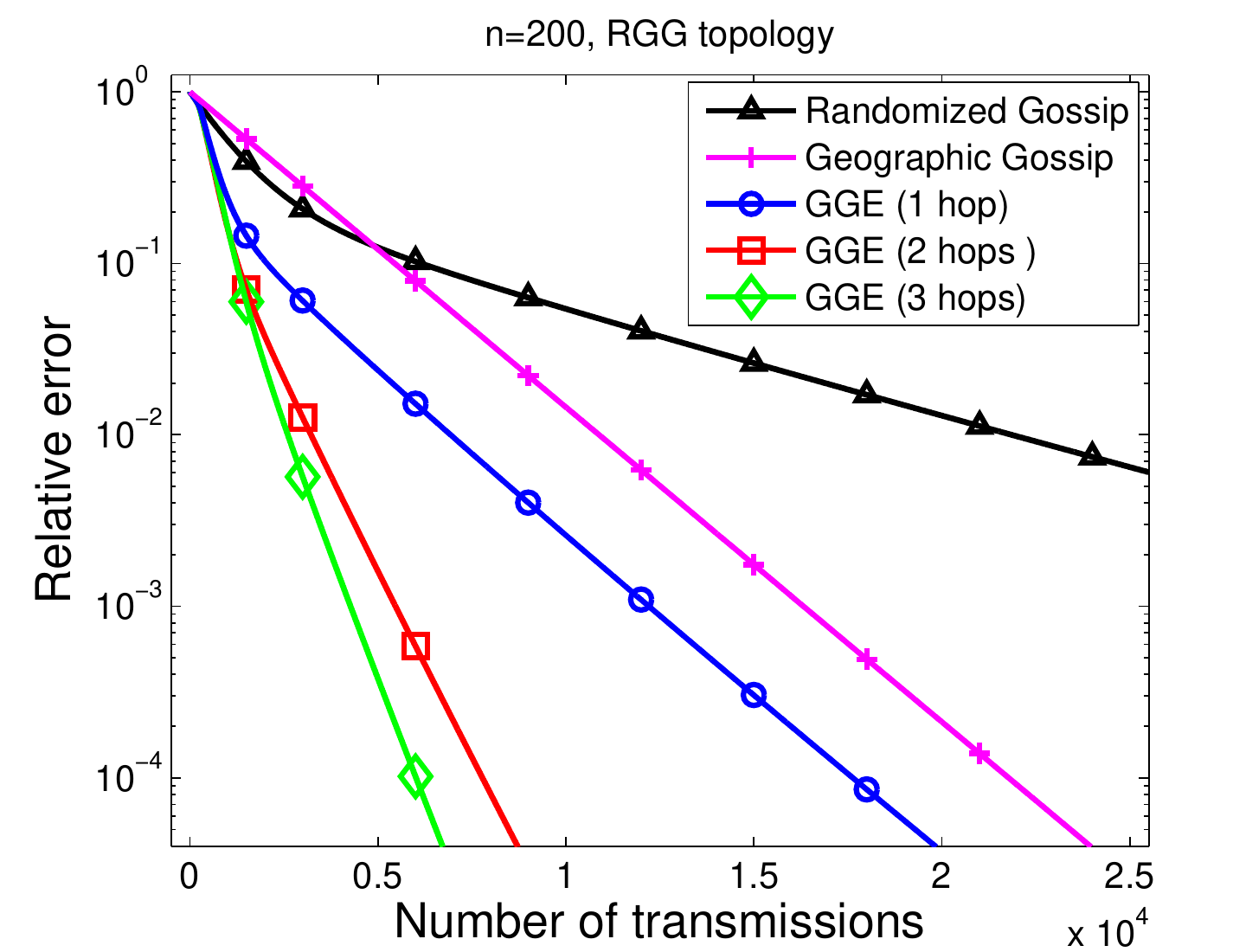}}
\caption{ A comparison of the performance of randomized gossip, GGE (1-hop), multi hop GGE (2 and 3-hops) and geographic gossip for linearly-varying field initialization of $x(0)$ in grid topology and Gaussian bumps initialization of $x(0)$ in RGG topology. Results are averaged over 100 runs of the algorithm.} \label{fig:MHgrid} 
\end{figure*}

\section{Conclusion}
\label{sec:summ}
In this paper we propose a new average consensus algorithm for wireless sensor networks. Greedy gossip with eavesdropping (GGE) makes use of the broadcast nature of wireless communications and provides fast and reliable computation of average consensus. We provide (i) a proof that GGE converges to the average consensus; (ii) a bound on the mean-squared error after $k$ iterations of  GGE; (iii) a bound on the $\epsilon$-averaging time of GGE; and (iv) theoretical bounds suggesting that GGE converges faster than randomized gossip, and (v) a characterization of the improvement in convergence rate achieved by GGE over randomized gossip as a function of the maximum degree. Simulation experiments compare the performance of GGE, randomized gossip~\cite{boyd06}, and geographic gossip~\cite{dimakis06} and demonstrate that the theoretical bound on mean-squared error provides a good characterization of the algorithm performance.  The simulation experiments also investigate the scaling behavior of the communication complexity of GGE.

GGE retains the robustness and simplicity of randomized gossip; it does not require nodes to acquire location information and it does not introduce the overhead of geographic routing. There is an additional memory overhead (nodes store their neighbors' values), but this storage requirement is small. Nodes do need to learn their neighbors' values, and we propose an initialization process that introduces a minor performance penalty with negligible added complexity.  Since nodes eavesdrop on their neighbors' broadcasts, they must remain in ``Receive" mode throughout the entire operation of the GGE algorithm. In randomized gossip, nodes can enter ``Idle" mode and only switch to ``Receive" mode when they detect that a neighbor is requesting a data exchange. In a wireless sensor network implementation, this difference could lead to concerns that GGE would consume more energy than randomized gossip. However, empirical studies have shown that energy consumption in ``Idle" and ``Receive" modes is very similar for most existing wireless sensor network architectures \cite{RSPS02}.

Our future work will investigate the benefits of GGE in networks of mobile nodes.  When nodes are mobile, other fast consensus approaches which exploit knowledge of geographic location are no longer applicable.  However, because GGE is purely local and adaptive, we believe it is a promising candidate for accelerating gossip algorithms in time-varying networks.  We also plan to investigate further connections between consensus algorithms and incremental subgradient optimization algorithms, towards computing more general functions than the average.

\bibliographystyle{IEEEtran}
\bibliography{gossip}

\end{document}